\documentclass[amsmath,amssymb]{article}

\setlength{\textheight}{22.4cm} \setlength{\textwidth}{16.2cm}
\setlength{\oddsidemargin}{0.1in} \setlength{\evensidemargin}{0.1in}
\setlength{\topmargin}{-0.4in} \setlength{\parskip}{0.1cm}

\usepackage{color}

%%%%%%%%%%%%%%%%%%%%%%%%%%%%%%%%%%%%%%%%%%%%%%%%%%%%%%%%%%%%%%%%%%%%%%%%%%%%%%%%%%%%%%%%%%%%%%%%%%%
\usepackage{graphicx}
\usepackage{amsmath}

%TCIDATA{OutputFilter=LATEX.DLL}
%TCIDATA{Created=Fri Oct 06 19:01:21 2006}
%TCIDATA{LastRevised=Tue Oct 24 12:52:52 2006}
%TCIDATA{<META NAME="GraphicsSave" CONTENT="32">}
%TCIDATA{<META NAME="DocumentShell" CONTENT="General\Blank Document">}
%TCIDATA{Language=American English}
%TCIDATA{CSTFile=LaTeX article (bright).cst}

\newtheorem{theorem}{Theorem}

\newtheorem{corollary}[theorem]{Corollary}

\newenvironment{proof}[1][Proof]{\textbf{#1.} }{\ \rule{0.5em}{0.5em}}

\usepackage{amsfonts}
\usepackage{mathrsfs}

\newcommand{\tensor}{\otimes}

\usepackage{makeidx}

\makeindex

\begin{document}
 
  \title{Quantum Filtering  for  Systems Driven by    {F}ermion Fields}

 \author{John E.~Gough\thanks{Institute for Mathematics and Physics, Aberystwyth University, SY23 3BZ, Wales, United Kingdom. Email: jug@aber.ac.uk }\and
 Madalin~I.~Guta\thanks{School of Mathematical Sciences, University of Nottingham, Nottingham, NG7 2RD, United Kingdom. Email: madalin.guta@nottingham.ac.uk}
\and Matthew R.~James\thanks{School of Engineering, Australian National University, Canberra, ACT 0200, Australia. Email: Matthew.James@anu.edu.au}\and
Hendra I.~Nurdin\thanks{School of Engineering, Australian National University, Canberra, ACT 0200, Australia. Email: Hendra.Nurdin@anu.edu.au}}

 \date{17 November,  2010}

\maketitle

\begin{abstract}
Recent developments in quantum technology mean that is it now possible to manipulate systems and measure fermion fields (e.g. reservoirs of electrons) at the quantum level. This progress has motivated some recent work on filtering theory for quantum systems driven by fermion fields by  Korotkov, Milburn and others. The purpose of this paper is to develop fermion filtering theory using the fermion quantum stochastic calculus. We explain that this approach has close connections to the classical filtering theory that is a fundamental part of the systems and control theory that has developed over the past 50 years.
\end{abstract}

%\tableofcontents

\section{Introduction}
\label{sec:intro}

  A basic problem 
 in control and communication systems  is that of extracting information  from a signal that may contain noise. Problems of this kind are known as {\em filtering} problems. One common scenario concerns the problem of estimating variables of a system from partial, and typically noisy, information. Here, the word `system' refers to the entity of interest, which may be a machine being controlled, or it may be a signal model. As remarked in \cite[sec. 1.2]{AM79}, filtering theory developed in response to demands from applications. For example, the Kalman filter \cite{REK60} was developed at a time when significant efforts were underway in aerospace engineering in the early 1960's. The Kalman filter is the solution of a filtering problem based on a statistical model involving Gaussian stochastic processes and linear dynamics (statistical filtering dates back to Kolmogorov \cite{AK41} and Wiener \cite{NW49}). A more general theory of nonlinear filtering was  developed during the 1960's: 
 Kushner \cite{HK64},  Stratonovich  \cite{RS60},
  Duncan \cite{TD67}, Mortensen \cite{RM66}, and 
Zakai \cite{MZ69}. These filtering results boil down to determination of conditional expectations in dynamical contexts, and this  may be achieved using powerful tools from stochastic calculus,  including idealized Wiener process models for noise and It\={o} stochastic  differential equations,  \cite[Chapter 18]{RE82}.

  At the present time, developments in {\em quantum technology} are demanding methods for statistical estimation (among other things). Quantum technologies are those technologies that depend on the laws of quantum mechanics for their operation. Examples  of quantum technologies currently under development include quantum computers and atom lasers. Progress in quantum filtering has been slow, due  to both conceptual and practical experimental issues concerning the measurement of quantum systems. However, significant advances were made in the  field of quantum optics, a field of study  concerned with quantum properties of light and the interaction of light and matter. In quantum optics, thanks to the invention of the laser and other experimental developments, quantum effects became  accessible and a sophisticated theory of open systems, measurement theory, and quantum noise emerged. In particular, we mention the quantum filtering results due to  Belavkin \cite{VPB92}, \cite{VPB92a} that extend the thinking behind the above-mentioned classical filtering theory  (Belavkin was a student of Stratonovich in the 1960's). Belavkin employed quantum stochastic methods that were developed in the 1980's to describe quantum noise and a quantum generalization of the It\={o} calculus, \cite{HP84}, \cite{GC85}. Belavkin's results involved  an implementation of conditional expectation in a dynamical  quantum mechanical  context, \cite{BHJ07}.  We also mention related work in quantum optics by Carmichael \cite{HC93} and Wiseman and Milburn \cite{WM93}, \cite{WM93b} which used more direct methods, and employed the terms `stochastic master equation' and `quantum trajectories' in connection with quantum filtering.
  
\begin{figure}[h]
\begin{center}
\includegraphics{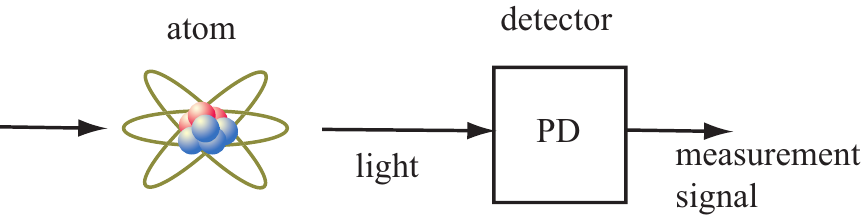} %\sidecaption[t]
\end{center}
\caption{Schematic representation of the  detection of  light  emitted by an atom. The measurement signal  $Y(t)$ (integrated photocurrent) is proportional to the number of photons in the light incident on the detector up to time $t$.}
\label{fig:photo-detector-0}
\end{figure}

The absorption and emission of light by atoms was one of the earliest problems studied in quantum optics, 
\cite{AE17}, \cite[Chapter 1]{RL00}, \cite[Chapter 1]{GZ00}. 
Light is a type of electromagnetic field, which in quantum mechanics is described as a {\em boson} field, with  characteristic 
 Bose-Einstein statistics.  Photons are the well-known `particles' or `quanta' of light - an example of a {\em boson}.   Boson fields may be used in other contexts, such as in the description of vibrations in solid materials, or in the modeling of dissipation to a heat bath.
To give some indication of how quantum filtering may be used in quantum optics, consider the setup of Figure \ref{fig:photo-detector-0}, which illustrates the detection of light emitted by an atom. If we model the atom as a two-level system with ground and excited states, then the occupation number operator $n$ for the excited state is a physical observable of interest (a self-adjoint operator  with eigenvalues 0 and 1). The event corresponding to the eigenvalue 1 means that the atom is in the excited state and so contains one \lq{quanta}\rq \   of energy. Any energy in the atom may be emitted to the ambient field,  a dissipative process. Indeed, the mean value $\bar n(t) = e^{-\gamma t} \bar n(0)$ decays exponentially ($\gamma$ is a parameter describing the strength of the coupling of the atom to the light field). The photodetector (PD) shown in Figure \ref{fig:photo-detector-0} is taken to be an idealized device that produces a classical (i.e. non-quantum) photocurrent that is proportional to the number of photons in the field and 
may be processed using conventional analog or digital electronics. A quantum filter is such a processing system that is designed to provide estimates of, in this case, atomic observables (which are not directly accessible). 
The quantum filter for the conditional expectation $\hat n(t)$ of the occupation number  given the  photocurrent is 
  \begin{equation}
d \hat n(t) = -\gamma \hat n(t) dt - \hat n(t) (dY(t) - \gamma \hat n(t) dt) ,
\label{eq:boson-filter-n}
\end{equation}
where $Y(t)$ is the integrated photocurrent signal.
 This filter has a form that is familiar to control engineers, which combines a prediction term with an update term involving the innovations process  $W(t)$ defined by $dW(t)=dY(t) - \gamma \hat n(t) dt$.
We refer the reader to the book \cite{WM10} and the tutorial paper \cite{BHJ07} for further information on quantum filtering involving boson fields. 

In quantum field theory, there is another type of field  that is distinct from boson fields. These are called {\em fermion fields}, and the quanta of these fields are called {\em fermions}, the electron being an important example.  Fermion fields have Fermi-Dirac statistics. 
 Fermion fields arise  in mesoscopic systems, such as quantum dots, which are of considerable technological importance (e.g. for use in quantum computers).
Quantum dots may be fabricated in semiconductor materials to confine one or a few electrons to a region the size of a few nanometers. 
Fermion fields may be used to describe the flow of electrons at the quantum level. Figure \ref{fig:quantum-dot-1} provides a schematic representation of a quantum dot connected to two fermion field channels representing a source and a sink. 
Experimentally it has been much harder to extract quantum properties of mesoscopic systems compared to quantum optical systems, however,  it is now experimentally feasible to do so.

\begin{figure}[h]
\begin{center}
\includegraphics{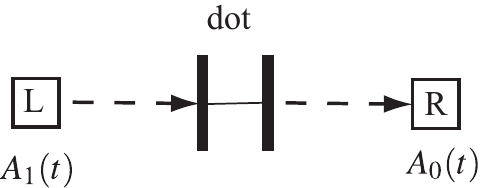} %\sidecaption[t]
\end{center}
\caption{Schematic representation of quantum dot, through which current tunnels from the source ($L$) ohmic contact to the drain ($R$), \cite{GM00}. }
\label{fig:quantum-dot-1}
\end{figure}

Some important results have been obtained concerning  quantum filtering theory for the case of fermion fields.
Korotokov \cite{AK99}, \cite{AK01} developed a phenomenological approach using Bayesian methods, and Goan {\em et al} \cite{GM01}, \cite{GMWS01} adapted methods from quantum optics. 
The purpose of the present paper is to develop quantum filtering theory for systems driven by fermion fields using the fermion quantum stochastic calculus, 
Applebaum and Hudson, 
\cite{AH84}, Hudson and Parthasarathy \cite{HP86}, Milburn \cite{GM00}, Gardiner \cite{CWG04}. 
We explain  that quantum filtering, involving both boson and fermion fields, and classical filtering share a common mathematical foundation in conditional expectations, and that stochastic calculus provides powerful tools with which quantum and classical filtering problems can be solved. The key difference between the classical and quantum problems is that in the quantum  cases, the random variables and stochastic processes (such as quantum observables and field operators) have  non-commutative algebraic structures that are fundamental to quantum mechanics (classical variables are  represented as scalar valued functions and so commute under pointwise multiplication).
The principle algebraic distinctions between the boson and fermion cases are the commutation and anticommutation relations satisfied by these fields (which underly the Bose-Einstein and Fermi-Dirac statistics), and the need to take parity into account in the fermion case.

In an effort to be as concrete as possible, in this paper we develop the filtering theory using a model with just enough generality to solve filtering problems for two examples. The first example we consider (see Section \ref{sec:eq-dot}) concerns the quantum dot shown in Figure \ref{fig:quantum-dot-1}, where the electron flow in the right  channel is monitored, and this information is used to estimate the number of electrons (0 or 1) in the dot. Our second example is based on a more detailed model for the process of photodetection shown in Figure \ref{fig:photo-detector-0}, \cite[sec. 8.5]{GZ00}. This more detailed model is shown in Figure \ref{fig:photo-detector-1}, and involves one boson and two fermion field channels.

\begin{figure}[h]
\begin{center}
\includegraphics{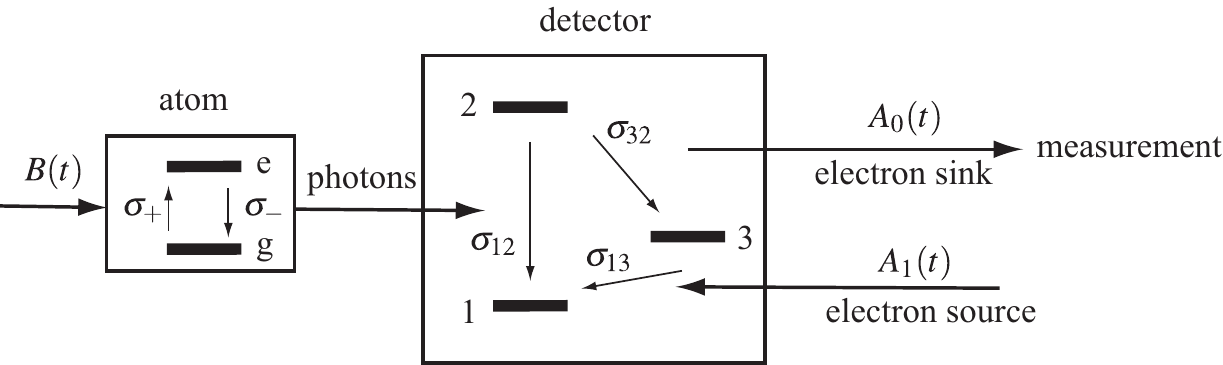} %\sidecaption[t]
\end{center}
\caption{Schematic representation of a model for the  detection of the photons  emitted by an atom.  The model includes one boson field channel $B(t)$, two fermion field channels $A_0(t)$ and $A_1(t)$, a two-level system to describe the atom, and a three-level system capturing the essential behavior of the detector. The notation is defined in Section \ref{sec:detect}.}
\label{fig:photo-detector-1}
\end{figure}

The paper is organized as follows. In Section \ref{sec:classical} we review some basic ideas about classical filtering theory, and in particular, we summarize how the fundamental nonlinear filtering results may be obtained using classical stochastic calculus.  Section \ref{sec:model}  provides a description of the quantum stochastic model that is needed to formulate the filtering problem. This section include a brief review of some aspects of quantum mechanics, with an emphasis on describing the boson and fermion commutation relations. The main fermion filtering results are presented in Section \ref{sec:filter}, and the examples are given in Section \ref{sec:examples}. Two appendices briefly summarize some basic aspects of classical and quantum stochastic calculus, and show how   parity arises in systems with fermionic degrees of freedom.

{\em Notation:}
In what follows the symbols 
 $\mathbf{E}$  and $\mathbb{E}$ represent classical and quantum expectations, respectively. The commutator of two operators $A$ and $B$ is denoted
  $[A,B]=AB-BA$, while the anticommutator is written as $\{ A, B \} = AB+BA$.  The Dirac notation for a vector $\psi$ in a Hilbert space $\mathfrak{H}$ is $\vert \psi \rangle$, and   the inner product is written as $\langle \phi, \psi \rangle$ or $\langle \phi \vert  \psi \rangle$. The quantum expectation for an observable $X$ when the system is in a state described by the vector $\psi$ is $\mathbb{E}_{\psi}[X] = \langle \psi, X\psi \rangle$ or $\langle \psi \vert  X \vert \psi \rangle$. The adjoint of an operator $X$ is denoted by $X^\ast$.

\section{Classical Filtering Theory}
\label{sec:classical}

 In this section we review some of the fundamental concepts and results concerning classical  (non-quantum) filtering theory that will assist with understanding the quantum filtering results to be presented below.
 Let $\xi$ and $Y$ be classical random variables, with joint density $p_{\xi,Y}(x,y)$. In the absence of any measurement data, one may simply use the marginal density $p_\xi(x) = \int p_{\xi,Y}(x,y) dy$ to make inferences about $\xi$; for instance, the mean $\bar \xi = \int p_\xi(x)dx$ gives us an indication of the  value  an observation of $\xi$ may yield. If a value $y$ of $Y$ is observed, then the density for $\xi$ is revised to the conditional density
\begin{equation}
p_{\xi \vert Y} (x \vert y) = \frac{ p_{\xi,Y}(x,y) }{p_Y(y)} ,
\end{equation}
reflecting an increase in knowledge.   The conditional mean is defined to be
\begin{equation}
\hat \xi = \int x p_{\xi \vert Y} (x \vert y)  dx ,
\end{equation}
which we note is a function of the observed data $y$. More generally, if $\phi$ is an arbitrary function, we may compute the conditional expected value
\begin{equation}
\hat \phi =  \int \phi(x) p_{X \vert Y} (x \vert y)  dx
\label{eq:hat-phi-c}
\end{equation}
of the random variable $\phi(\xi)$. The RHS of (\ref{eq:hat-phi-c}) is an explicit expression for the conditional expectation, which is denoted more generally as
\begin{equation}
\hat \phi = \pi(\phi)= \mathbf{E}[ \phi(\xi) \vert Y].
\end{equation}

As a simple example, suppose that
\begin{equation}
Y=\xi+V,
\label{eq:Y-basic}
\end{equation}
where $\xi$ and $V$ are independent Gaussian random variables with   means $\bar \xi$ and $0$, and variances $\Sigma_\xi$ and $1$. The expression (\ref{eq:Y-basic}) for the observations $Y$ is an instance of the fundamentally important \lq\lq{signal plus noise}\rq\rq \ models widely employed in control and communications systems.
The conditional mean $\hat \xi$ is also Gaussian \cite{AM79} and is given by
\begin{equation}
\hat \xi = \bar \xi + \Sigma_\xi( 1+\Sigma_\xi)^{-1}( Y-\bar \xi) .
\label{eq:X-cond-mean-gaussian-c}
\end{equation}
This expression shows how the conditional mean updates the prior mean $\bar \xi$ by the addition of a term $W=Y-\bar \xi$, called the innovation. The innovation represents the new information about $\xi$ that is gained from an observation of $Y$.

The innovation is related to the minimum variance property of the conditional mean, which means that $\hat \xi$ minimizes the variance of the error   \lq\lq{error}\rq\rq \ $E=\tilde \xi-\hat \xi$ over all estimators $\tilde \xi$. Here, estimators $\tilde \xi$ are random variables that are functions of the observation $Y$, that is, random variables that belong to the subspace $\mathscr{Y}$ generated by $Y$.\footnote{That is, $\mathscr{Y}$ is the subspace of square integrable random variables that are measurable with respect to the $\sigma$-algebra $\sigma(Y)$ generated by $Y$.}
 The least squares property has a nice geometrical interpretation,  where $\hat \xi$ is the orthogonal projection of $\xi$ onto $\mathscr{Y}$, Figure \ref{fig:c-exp-1}.

\begin{figure}[h]
\begin{center}
\includegraphics{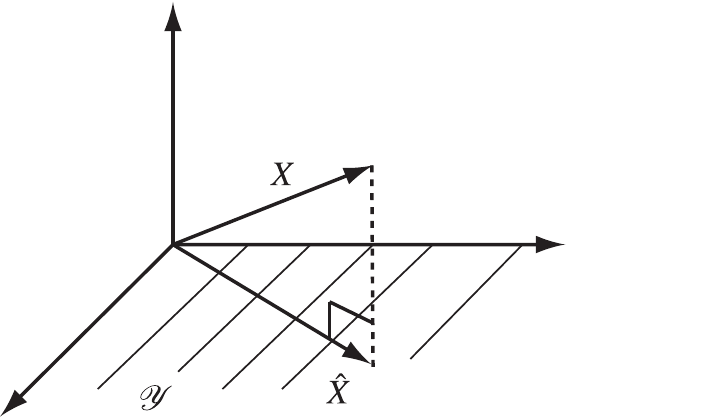}
%\sidecaption[t]
\caption{The conditional expectation $\hat \xi=\mathbf{E}[ \xi \vert  Y]$ is the orthogonal projection of $X$ onto the subspace $\mathscr{Y}$, \cite[Fig. 5.2-1]{AM79}.
}
\label{fig:c-exp-1}
\end{center}
\end{figure}

In general, if $\phi$ and $Y$ have well defined expectance then the conditional expectation $\mathbf{E}[ \phi \vert Y ]$  
is {\em defined} to be the unique (up to a set of measure zero with respect to the underlying probability measure) random variable  $\hat\phi = \mathbf{E}[ \phi \vert Y] \in \mathscr{Y}$ such that \cite[Chapter 1]{WH85}
\begin{equation}
 \mathbf{E}[ \tilde \phi  \phi  ]  = \mathbf{E}[  \tilde \phi \hat \phi ]
\label{eq:qm-prob-cexp-1}
\end{equation}
 for all bounded  random variables $\tilde \phi \in \mathscr{Y}$.

\subsection{Nonlinear Filtering of Classical Systems in Continuous Time}

In control and communications systems, signals are often modeled as
stochastic processes, which are sequences of random variables. For linear
Gaussian systems, the Kalman filter computes the conditional mean and
covariances in a causal manner as time progresses, as described in 
the book \cite{AM79} for discrete time
systems.
Of interest to us here are general approaches to filtering in continuous
time that exploit the power of the stochastic calculus, \cite[Chapter 18]{RE82}. We refer the reader to Appendix \ref{sec:app-stoch-classical} for some basic concepts concerning stochastic integrals and the It\={o} rule.

Consider the stochastic system expressed as a system of It\={o}
stochastic differential equations 
\begin{eqnarray}
d \xi(t) &=& g( \xi(t)) dt + dV_1(t)  \label{eq:c-X-dyn} \\
dY(t) &=& h(\xi(t)) dt + dV_2(t)  \label{eq:c-Y-dyn}
\end{eqnarray}
where $V_1(t)$ and $V_2(t)$ are independent Wiener processes. Here, $\xi(t)$
represents a signal of interest that is not directly accessible to
observation. Instead, a signal $Y(t)$ is observed.

Let us first consider the dynamics of expected  values for this system. For any sufficiently regular
function $\phi$ write $j_t(\phi)=\phi(\xi(t))$ and define 
\begin{equation}
\mu_t( \phi) = \mathbf{E}[ j_t(\phi) ]
\end{equation}
for the mean of the random variable $j_t(\phi)$. Now by the It\={o} rule we have 
\begin{equation}
d j_t(\phi) = j_t( \frac{d \phi}{dx} g ) dt + j_t( \frac{d \phi}{dx}
)dV_1(t) ) + j_t( \frac{1}{2} \frac{d^2 \phi}{dx^2} )dt,
\end{equation}
and so 
\begin{equation}
\frac{d}{dt} \mu_t( \phi) = \mu_t ( \mathcal{L}(\phi) ), 
 \label{eq:X-mean-c}
\end{equation}
where 
\begin{equation}
\mathcal{L}(\phi) = \frac{1}{2} \frac{d^2 \phi}{dx^2} + \frac{d \phi}{dx} g
\end{equation}
is the generator of the Markov process $\xi(t)$ given by (\ref{eq:c-X-dyn}).
%We note that the averaged dynamics displays dissipation which is characterized by the second order differential operator $\mathcal{L}$.
The  Kolmogorov equation for the density $p(x,t)$ defined by 
$\mu_t(\phi) = \int \phi(x) p(x,t)dx$ is
\begin{equation}
\frac{\partial }{\partial t} p = \mathcal{L}^\ast( p) ,
\end{equation}
where $\mathcal{L}^\ast(p)=\frac{1}{2} \frac{d^2}{dx^2}p -  \frac{d }{dx} (gp)$. 

Now suppose we wish to determine the differential equation for the
conditional expectation 
\begin{equation}
\pi_t( \phi) = \mathbb{E}[ j_t(\phi) \vert \mathscr{Y}_t ] ,
\end{equation}
where $\mathscr{Y}_t$ is generated by the observations $Y(s)$, $0 \leq s
\leq t$. For convenience, we write $h_t$ for $h(\xi (t))$. We might expect
the filter equation to be a modification of the mean equation (\ref
{eq:X-mean-c}) along the lines of (\ref{eq:X-cond-mean-gaussian-c}). There
are several commonly used methods for finding the filter equations,
including the martingale approach (which lead to the Kushner-Stratonovich
equation \cite{HK64}, \cite{RS60}), the reference probability method (giving
the Duncan-Mortensen-Zakai equation \cite{TD67}, \cite{M68}, \cite{MZ69}), and
the characteristic function method, \cite{HSM05},  \cite{VPB93}, \cite{VPB92}
which we will use in this paper. 

We suppose that the filter has the form 
\begin{equation}
d \pi_t(\phi) = F_t(\phi)dt + G_t( \phi) dY(t)
\end{equation}
and consider, for any square-integrable function $f$, the stochastic process 
$C_t (f)$ defined to be the solution of 
\begin{equation}
d C_t (f) = f(t) C_t (f) dY(t), \ \ C_f(0)=I.
\end{equation}
By the definition of conditional expectation (recall (\ref{eq:qm-prob-cexp-1}%
)), we have 
\begin{equation}
\mathbb{E}[ j_t(\phi) C_t (f) ] = \mathbb{E}[ \pi_t(\phi) C_t (f)]
\label{eq:c-char-1}
\end{equation}
for all $f$. We note that the It\={o} product rule implies $I+II+III=0$ where

\begin{eqnarray*}
I &=&\mathbb{E}\left[ \left( d\pi _{t}\left( \phi \right) -dj_{t}\left( \phi
\right) \right) C_t (f) \right] , \\
II &=&\mathbb{E}\left[ \left( \pi _{t}\left( \phi \right) -j_{t}\left( \phi
\right) \right) dC_t (f) \right] , \\
III &=&\mathbb{E}\left[ \left( d\pi _{t}\left( \phi \right) -dj_{t}\left(
\phi \right) \right) dC_t (f) \right] .
\end{eqnarray*}

For the model above, we then have 
\begin{eqnarray*}
I &=&\mathbb{E}\left[ \left\{ F_{t}\left( \phi \right) +h_{t}G_{t}\left(
\phi \right) -j_{t}\left( \mathcal{L}\phi \right) \right\} C_{t}\left(
f\right) \right] dt\equiv \mathbb{E}\left[ \left\{ F_{t}\left( \phi \right)
+\pi _{t}(h_{t})G_{t}\left( \phi \right) -\pi _{t}\left( \mathcal{L}\phi
\right) \right\} C_{t}\left( f\right) \right] dt, \\
II &=&\mathbb{E}\left[ \left\{ \pi _{t}\left( \phi \right) -j_{t}\left( \phi
\right) \right\} f\left( t\right) C_{t}\left( f\right) h_{t}\right] dt\equiv
f\left( t\right) \mathbb{E}\left[ \left\{ \pi _{t}\left( \phi \right) \pi
_{t}\left( h_{t}\right) -j_{t}\left( \phi h_{t}\right) \right\} C_{t}\left(
f\right) \right] dt, \\
III &=&\mathbb{E}\left[ G_{t}\left( \phi \right) f\left( t\right)
C_{t}\left( f\right) \right] dt.
\end{eqnarray*}

As $f(t)$ was arbitrary, we may separate the $f$ independent and dependent
terms to obtain 
\begin{equation*}
F_{t}\left( \phi \right) =\pi _{t}\left( \mathcal{L} \phi \right) -\pi
_{t}\left( h_{t}\right) G_{t},\quad G_{t}\left( \phi \right) =\pi _{t}\left(
\phi h_{t}\right) -\pi _{t}\left( \phi \right) \pi _{t}\left( h_{t}\right)
\end{equation*}
so that 
\begin{equation*}
d\pi _{t}\left( \phi \right) =\pi _{t}\left( \mathcal{L} \phi \right)
dt+\left\{ \pi _{t}\left( \phi h_{t}\right) -\pi _{t}\left( \phi \right) \pi
_{t}\left( h_{t}\right) \right\} dW_{t}
\end{equation*}
where the innovations process is a $\mathscr{Y}_t$ martingale (actually a
standard Wiener process) given by 
\begin{equation}
dW(t) = dY(t) - \pi_t(h_t) dt .
\end{equation}

The conditional density $\hat p(x,t)$ defined by $\int \phi(x) \hat p(x,t)dx = \mathbf{E}[ j_t(\phi) \vert \mathscr{Y}_t ]$ satisfies the equation
\begin{equation}
d \hat  p = \mathcal{L}^\ast(\hat p) dt +  (h - \pi_t(h) )\hat p dW(t).
\end{equation}

In the special case of linear systems, say $g(\xi)=a \xi$ and $h(\xi)=c \xi$, with initial Gaussian states, the  process $\xi(t)$ is Gaussian, with
mean $\bar \xi(t) = \mathbf{E}[ \xi(t)]$ satisfying the equation
\begin{equation}
\dot{\bar{\xi}}(t) = a \bar \xi(t),
\end{equation}
and variance $\Gamma(t) = \mathbf{E}[ ( \bar \xi(t) - \xi(t) )^2 ]$ satisfying
\begin{equation}
\dot \Gamma(t) = 2a \Gamma(t) + 1.
\end{equation}
 The
conditional mean $\hat \xi(t)=\mathbf{E}[\xi(t) \vert \mathscr{Y}_t]$ is Gaussian and is given by the Kalman filter equations
\begin{eqnarray}
d \hat \xi (t) &=& a \hat \xi(t) dt + c \Sigma(t) ( dY(t) - c \hat \xi(t) dt)
\label{eq:kalman-1}
\\
\dot \Sigma(t) &=& 2a \Sigma(t) + 1 + c^2 \Sigma^2.
\label{eq:kalman-2}
\end{eqnarray}
Here, $\Sigma(t) = \mathbf{E}[ (\hat \xi(t) - \xi(t) )^2 \vert \mathscr{Y}_t]$ is the conditional variance, a deterministic quantity (a special feature of the linear-Gaussian case).

\section{Quantum Stochastic Model}

\label{sec:model}

In this paper we are interested in a quantum system interacting with quantum
fields, for instance as sketched in Figure \ref{fig:fermion-model1}. Here
the ``box" represents a quantum system with finitely many degrees of
freedom, such as an atom or a  quantum dot or a photodetector. The 
input/output lines are
quantum fields,  representing reservoirs of electrons or photons coupled
to the system. The model for this system has a natural input-output
structure, 
with an  input being the incident part of the field, while  the output is the reflected
part of the field which 
 carries  away
information about the system. Our main goal in this paper concerns
estimation of system variables given the results of monitoring the output of
fermion channel $0$. The purpose of this section is to describe a quantum
mechanical model for this system.

\begin{figure}[h]
\begin{center}
\includegraphics{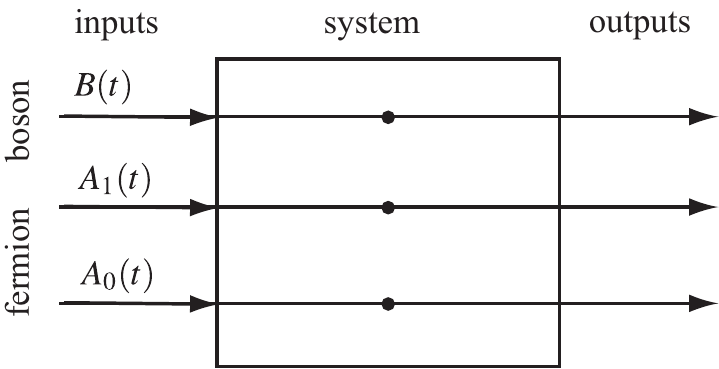} %\sidecaption[t]
\end{center}
\caption{Schematic representation of a system coupled to boson $B$ and
fermion $A_0$, $A_1$ fields.}
\label{fig:fermion-model1}
\end{figure}

\subsection{Quantum Mechanics}
\label{sec:model-qm}

Quantum mechanics was developed in the 20th century in order to explain the behavior of light and matter on a small scale, \cite{EM98}, \cite{AFP09}.    
Central to quantum mechanics are the notions of observables $X$,
which are mathematical representations of physical quantities that
can (in principle) be measured, and state\footnote{
The word \lq{state}\rq \  is heavily overloaded in the physical and engineering sciences, though its   core meaning as a way of minimally storing dynamical and statistical information is common to all interpretations.
}
 vectors $\psi$, which
summarize the status of physical systems and permit the
calculation of expected values  of observables. State vectors may be
described mathematically as elements of a Hilbert space
$\mathfrak{H}$, while observables are self-adjoint operators on
$\mathfrak{H}$.  The expected value of an observable $X$ when in
state $\psi$ is given by the inner product  $\langle \psi, X \psi
\rangle$.

The simplest non-trivial quantum system has two energy levels and is often
used to model ground and excited states of atoms. Since the advent
of quantum computing, this system is also known as the qubit, the
unit of quantum information. 
The Hilbert space
for this system is $\mathfrak{H}=\mathbf{C}^2$, the two-dimensional
complex vector space. The space of all operators is
 spanned by the Pauli matrices \cite[sec.
2.1.3]{NC00}, \cite[sec. 9.1.1]{GZ00}:
\begin{eqnarray*}
\sigma_0 = I= \left(  \begin{array}{cc}  1 & 0 \\ 0 & 1\end{array}  \right), \ \
\sigma_x = I= \left(  \begin{array}{cc}  0 & 1 \\ 1 & 0 \end{array}  \right), \ \
%\\
\sigma_y = I= \left(  \begin{array}{cc}  0 & -i \\ i & 0 \end{array}  \right),  
\sigma_z = I= \left(  \begin{array}{cc}  1 & 0 \\ 0 & - 1\end{array}  \right).
\end{eqnarray*}
Since in general operators do not commute, the {\em commutator} 
\begin{equation}
[A,B]= AB-BA,
\end{equation}
which is a measure of the failure to commute, is frequently used. 
The basic
{\em commutation relations} for the Pauli matrices are $[\sigma_x, \sigma_y]=2i \sigma_z$,
$[\sigma_y, \sigma_z]=2i \sigma_x$, and $[\sigma_z, \sigma_x]=2i
\sigma_y$.

The energy levels correspond to the eigenvalues $\pm 1$ of
$\sigma_z$, and the corresponding eigenstates are referred to as the ground $\vert - 1 \rangle$ and excited states $\vert   1 \rangle$. These eigenstates form a basis for $\mathfrak{H}$. In quantum mechanics, state vectors $\psi$ are normalized to one, so that we may write
\begin{equation}
\psi = \alpha \vert -1 \rangle + \beta \vert 1 \rangle ,
\end{equation}
where $\vert \alpha \vert^2+\vert \beta \vert^2=1$.
The operators $\sigma_{\pm}=\frac{1}{2}(\sigma_x \pm i\sigma_y)$ are known as
raising and lowering operators, and have
  actions  $\sigma_+ \vert -1 \rangle = \vert 1 \rangle$ and $\sigma_- \vert  1 \rangle = \vert -1 \rangle$, as
  illustrated in 
Figure \ref{fig:levels} (a). Thus $\sigma_+$ corresponds to the creation of a quanta of energy in the system, while $\sigma_-$ corresponds to destruction. The number operator $n=\sigma_+ \sigma_-$ counts the number of quanta in the system, which is in this case is 0 or 1 (ground or excited).
Note that $\sigma_+$ is the adjoint of $\sigma_-$: $\sigma_+=\sigma_-^\ast$.
The raising and lowering operators satisfy the {\em anticommutation relation} 
\begin{equation}
\{ \sigma_-, \sigma_-^\ast  \}=1,
\label{eq:anti-ccr}
\end{equation}
where the {\em anticommutator} is defined by
\begin{equation}
\{ A,B \} = AB + BA.
\end{equation}
Note also that $\sigma_-^2=0$.

The postulates of quantum mechanics state that if an observable $A$ is measured, the allowable measurement values are the eigenvalues $\lambda_j$ of $A$. If the system is in state $\psi$, the probability of observing the outcome $\lambda_j$ is
\begin{equation}
\mathrm{Prob}(\lambda_j) = \langle \psi, P_j \psi \rangle,
\end{equation}
where $P_j$ is the projection associated with the eigenvalue $\lambda_j$. When an eigenvalue $\lambda_j$ is recorded, the state \lq\lq{collapses}\rq\rq \ to
\begin{equation}
\psi' = \frac{P_j \psi}{\sqrt{\mathrm{Prob}(a_j)}} .
\end{equation}

This brief discussion of quantum measurement theory has focused on ideal measurements when the system is in a pure state (a state vector). Density operators $\rho$ (self-adjoint non-negative operators  of trace one) provide a more general notion of state, where the probability of outcome is given by $\mathrm{Prob}(\lambda_j) = \mathrm{tr}[ \rho P_j]$, and the collapsed state is $\rho' = \frac{P_j \rho P_j}{ \mathrm{Prob}(\lambda_j) }$.
A theory of generalized measurements was developed largely during the 70's an 80's that allows for non-ideal circumstances \cite[Chapters 2 and 8]{NC00}, and by the addition of  ancilla systems all of quantum measurement theory can be seen to be a consequence of quantum conditional expectation, \cite{BHJ07}.

Another fundamental postulate of quantum mechanics states that the dynamics of states and observables is unitary. That is,
 state vectors and observables evolve
according to
\begin{equation}
\psi(t) = U(t) \psi ,  \ \ \ X(t) = U^\ast(t) X U(t) ,
\label{eq:pictures}
\end{equation}
where the operator $U(t)$
is unitary ($U^\ast(t) U(t)= U(t) U^\ast(t) = I$).
The unitary $U(t)$ is the solution of the {\em Schr\"{o}dinger equation}
\begin{equation}
 \frac{d}{dt} U(t) = -i H U(t) ,
\end{equation}
where the observable $H$ is called the {\em Hamiltonian}. 
The  expressions in (\ref{eq:pictures})   provide two equivalent descriptions (dual), the
former is referred to as the {\em Schr\"{o}dinger picture}, while
the latter is the {\em Heisenberg picture}.  
The differential equations in the Schr\"{o}dinger and Heisenberg pictures are respectively
\begin{equation}
\frac{d}{dt} \psi(t) = -i H \psi(t) 
\end{equation}
and
\begin{equation}
\frac{d}{dt} X(t) = - i [ X(t), H(t) ] ,
\label{prelim:hp1}
\end{equation}
where $H(t)=U^\ast(t) H U(t)$.

\begin{figure}[h]
\begin{center}
\setlength{\unitlength}{1879sp}%   1579
\begingroup\makeatletter\ifx\SetFigFont\undefined%
\gdef\SetFigFont#1#2#3#4#5{%
  \reset@font\fontsize{#1}{#2pt}%
  \fontfamily{#3}\fontseries{#4}\fontshape{#5}%
  \selectfont}%
\fi\endgroup%
\begin{picture}(9255,5270)(1036,-10109)
\put(10201,-5836){\makebox(0,0)[lb]{\smash{{\SetFigFont{5}{6.0}{\familydefault}{\mddefault}{\updefault}{\color[rgb]{0,0,0}$n=3$}%
}}}}
\thicklines
{\color[rgb]{0,0,0}\put(2101,-8161){\line( 1, 0){1800}}
}%
{\color[rgb]{0,0,0}\put(7801,-9361){\line( 1, 0){1800}}
}%
{\color[rgb]{0,0,0}\put(1801,-8461){\vector( 0,-1){600}}
}%
{\color[rgb]{0,0,0}\put(7501,-8461){\vector( 0,-1){600}}
}%
{\color[rgb]{0,0,0}\put(7801,-8161){\line( 1, 0){1800}}
}%
{\color[rgb]{0,0,0}\put(7801,-6961){\line( 1, 0){1800}}
}%
{\color[rgb]{0,0,0}\put(7801,-5761){\line( 1, 0){1800}}
}%
{\color[rgb]{0,0,0}\put(7501,-7261){\vector( 0,-1){600}}
}%
{\color[rgb]{0,0,0}\put(7501,-5986){\vector( 0,-1){600}}
}%
{\color[rgb]{0,0,0}\put(9901,-9061){\vector( 0, 1){600}}
}%
{\color[rgb]{0,0,0}\put(9901,-7861){\vector( 0, 1){600}}
}%
{\color[rgb]{0,0,0}\put(9901,-6661){\vector( 0, 1){600}}
}%
{\color[rgb]{0,0,0}\put(9901,-5461){\vector( 0, 1){600}}
}%
{\color[rgb]{0,0,0}\put(7501,-4861){\vector( 0,-1){600}}
}%
{\color[rgb]{0,0,0}\put(4201,-9061){\vector( 0, 1){600}}
}%
\put(1201,-8836){\makebox(0,0)[lb]{\smash{{\SetFigFont{5}{6.0}{\familydefault}{\mddefault}{\updefault}{\color[rgb]{0,0,0}$\sigma_-$}%
}}}}
\put(1051,-8161){\makebox(0,0)[lb]{\smash{{\SetFigFont{5}{6.0}{\familydefault}{\mddefault}{\updefault}{\color[rgb]{0,0,0}excited}%
}}}}
\put(1051,-9361){\makebox(0,0)[lb]{\smash{{\SetFigFont{5}{6.0}{\familydefault}{\mddefault}{\updefault}{\color[rgb]{0,0,0}ground}%
}}}}
\put(2776,-10036){\makebox(0,0)[lb]{\smash{{\SetFigFont{5}{6.0}{\familydefault}{\mddefault}{\updefault}{\color[rgb]{0,0,0}(a)}%
}}}}
\put(6826,-9436){\makebox(0,0)[lb]{\smash{{\SetFigFont{5}{6.0}{\familydefault}{\mddefault}{\updefault}{\color[rgb]{0,0,0}vacuum}%
}}}}
\put(6901,-6436){\makebox(0,0)[lb]{\smash{{\SetFigFont{5}{6.0}{\familydefault}{\mddefault}{\updefault}{\color[rgb]{0,0,0}$a$}%
}}}}
\put(10276,-8836){\makebox(0,0)[lb]{\smash{{\SetFigFont{5}{6.0}{\familydefault}{\mddefault}{\updefault}{\color[rgb]{0,0,0}$a^\ast$}%
}}}}
\put(10276,-7636){\makebox(0,0)[lb]{\smash{{\SetFigFont{5}{6.0}{\familydefault}{\mddefault}{\updefault}{\color[rgb]{0,0,0}$a^\ast$}%
}}}}
\put(10276,-6361){\makebox(0,0)[lb]{\smash{{\SetFigFont{5}{6.0}{\familydefault}{\mddefault}{\updefault}{\color[rgb]{0,0,0}$a^\ast$}%
}}}}
\put(6901,-5236){\makebox(0,0)[lb]{\smash{{\SetFigFont{5}{6.0}{\familydefault}{\mddefault}{\updefault}{\color[rgb]{0,0,0}$a$}%
}}}}
\put(10201,-5236){\makebox(0,0)[lb]{\smash{{\SetFigFont{5}{6.0}{\familydefault}{\mddefault}{\updefault}{\color[rgb]{0,0,0}$a^\ast$}%
}}}}
\put(6901,-7636){\makebox(0,0)[lb]{\smash{{\SetFigFont{5}{6.0}{\familydefault}{\mddefault}{\updefault}{\color[rgb]{0,0,0}$a$}%
}}}}
\put(6901,-8836){\makebox(0,0)[lb]{\smash{{\SetFigFont{5}{6.0}{\familydefault}{\mddefault}{\updefault}{\color[rgb]{0,0,0}$a$}%
}}}}
\put(4576,-8836){\makebox(0,0)[lb]{\smash{{\SetFigFont{5}{6.0}{\familydefault}{\mddefault}{\updefault}{\color[rgb]{0,0,0}$\sigma_+$}%
}}}}
\put(8476,-10036){\makebox(0,0)[lb]{\smash{{\SetFigFont{5}{6.0}{\familydefault}{\mddefault}{\updefault}{\color[rgb]{0,0,0}(b)}%
}}}}
\put(8701,-5011){\rotatebox{90.0}{\makebox(0,0)[lb]{\smash{{\SetFigFont{5}{6.0}{\rmdefault}{\mddefault}{\updefault}{\color[rgb]{0,0,0}...}%
}}}}}
\put(10276,-9436){\makebox(0,0)[lb]{\smash{{\SetFigFont{5}{6.0}{\familydefault}{\mddefault}{\updefault}{\color[rgb]{0,0,0}$n=0$}%
}}}}
\put(10201,-8236){\makebox(0,0)[lb]{\smash{{\SetFigFont{5}{6.0}{\familydefault}{\mddefault}{\updefault}{\color[rgb]{0,0,0}$n=1$}%
}}}}
\put(10201,-7036){\makebox(0,0)[lb]{\smash{{\SetFigFont{5}{6.0}{\familydefault}{\mddefault}{\updefault}{\color[rgb]{0,0,0}$n=2$}%
}}}}
{\color[rgb]{0,0,0}\put(2101,-9361){\line( 1, 0){1800}}
}%
\end{picture}%

\caption{Energy level diagrams. (a) Two-level atom (qbit). (b) Harmonic oscillator.}
\label{fig:levels}
\end{center}
\end{figure}

Another  basic example is that of a particle moving in a potential well,
\cite[Chapter 14]{EM98}. The position and momentum of the particle
are represented by   observables $Q$ and $P$, respectively,
defined by
\begin{equation}
(Q \psi)(q) = q \psi (q), \ \ \ (P\psi)(q) = -i    \frac{d}{dq} \psi(q)
\end{equation}
for $\psi \in \mathfrak{H} = L^2(\mathbf{R})$. Here,   $q\in \mathbf{R}$
represents  position values.  The position and momentum operators satisfy the
commutation relation $[Q,P]=i$. The dynamics of the particle
is  determined by the Hamiltonian
$
H = \frac{P^2}{2m} + \frac{1}{2}  m \omega^2 Q^2
$
 (here, $m$ is the mass of the particle,
and $\omega$ is the frequency of oscillation).

Energy eigenvectors $\psi_n$ are defined by the equation $H \psi_n
= E_n \psi_n$ for real numbers $E_n$. The system has a discrete
energy spectrum $E_n = (n+\frac{1}{2})  \omega$,
$n=0,1,2,\ldots$. The state $\psi_0$ corresponding to $E_0$ is
called the {\em ground state}. The {\em annihilation operator}
\begin{equation}
a = \sqrt{\frac{m \omega}{2} }(Q+ i \frac{P}{2m\omega})
\end{equation}
and the creation operator $a^\ast$ lower and raise energy levels,
respectively: $a \psi_n =  \sqrt{n} \psi_{n-1}$, and $a^\ast
\psi_n = \sqrt{n+1} \psi_{n+1}$, see Figure \ref{fig:levels} (b).
 They satisfy the canonical
commutation relation 
\begin{equation}
[a,a^\ast]=1.
\label{eq:ccr}
\end{equation}
 In terms of these operators,
the Hamiltonian  can be expressed as $H =   \omega( a^\ast a +
\frac{1}{2})$. Here, $n=a^\ast a$ is the number or counting operator, with eigenvalues  $0,1,2,\ldots$.
 Using (\ref{prelim:hp1}), the annihilation
operator evolves according to
$
\frac{d}{dt} a(t) = -i\omega a(t)
$
with solution $a(t)=e^{-i \omega t} a$.  Note that also $a^\ast(t)
= e^{i \omega t} a^\ast$, and so commutation relations are
preserved by the unitary dynamics:
$[a(t),a^\ast(t)]=[a,a^\ast]=1$. Because of the oscillatory nature
of the dynamics, this system is often referred to as the {\em
quantum harmonic oscillator}.

\subsection{Boson and Fermion Fields}
\label{sec:model-fields}

We may think of observables as \emph{quantum random variables}, and  
the key distinction with classical probability is that quantum random
variables do not in general commute. Indeed, if $(\Omega,\mathcal{F},P)$ is a classical probability space 
then classical bounded real-valued random variables in $L^{\infty}(\Omega,\mathcal{F},P)$ have an interpretation as 
multiplication operators that map the Hilbert space $L^2(\Omega,\mathcal{F},P)$ to itself. Since all such operators commute with one another,
bounded classical real-valued random variables are thus isomorphic to (and can be viewed as) commuting observables on $L^2(\Omega,\mathcal{F},P)$; see \cite{BHJ07} for further discussions, including the case of unbounded classical random variables. 
 
In this section we turn to the notion of \emph{quantum stochastic processes} which are used to provide tractable
models for how quantum systems evolve in dissipative environments, such as a
quantum dot in an electron field, or an atom in an electromagnetic field. A
detailed introduction to quantum fields is beyond the scope of the present
paper, and we refer interested readers to the references for more
information, particularly \cite[Chapter 21]{EM98}, \cite[Chapter II]{KRP92}, 
\cite[Section 4]{BHJ07}. Here we give some basic ideas needed for the
filtering results to follow.

In quantum field theory, a one dimensional 
quantum field (with parameter  $t$) consists of a collection of systems each with annihilation $a(t)$ and creation operators $a^\ast(t)$ used to describe the annihilation and
creation of quanta or particles at index location or point $t$. $a(t)$ and $a^\ast(t)$ are referred to
as {\em field operators}, the annihilation and creation field operators, respectively. The index $t$ may
represent a range of variables, including position, frequency and time, and we assume here that $t$ lies in a continuous interval $T$ in $\mathbb{R}$. 
Basic considerations lead to the postulate that the annihilation and
creation operators must satisfy either the commutation relations 
\begin{equation}
[ a(t), a^\ast(t') ]  = \delta(t-t'), 
 \label{eq:boson}
\end{equation}
or the anticommutation relations 
\begin{equation}
\{ a(t), a^\ast(t') \} = \delta(t-t'), 
 \label{eq:fermion}
\end{equation}
for all $t,t' \in T$, where $\delta(t)$ denotes the Dirac delta distribution.

Fields that satisfy the commutation relations (\ref{eq:boson}) are called 
\emph{boson} fields (e.g. photons), while fields that satisfy the
anticommutation relations (\ref{eq:fermion}) are called \emph{fermion}
fields (e.g. electrons). In this paper we will take the parameter  $t$ to be time
and  $T=[0,\infty)$. In this case $a(t)$ has the interpretation of annihilation of
a photon (in the case of a bosonic field) or electron (in the case of fermionic field) at time $t$, whereas
$a^*(t)$ has the interpretation of creation a photon (in the case of bosons) or electron (in the case of fermions) at time $t$.
One can imagine these fields as a continuous collection or stream of distinct quantum systems (one quantum system for each $t$) hence, informally, quantum fields can be defined on some continuous tensor product Hilbert space $\mathcal{H}= \otimes_{t \in [0,\infty)} \mathcal{H}_t$, where $\mathcal{H}_t$ is a Hilbert space for each $t$ (of the quantum system arriving at time $t$). Although such an object can be rigorously defined and constructed, from a mathematical viewpoint it is much easier not to work directly with the field operators $a(t)$ and $a^\ast(t)$ but with their integrated versions, the so-called smeared quantum field operators, as will be discussed below. Smeared quantum field operators can be constructed on Hilbert spaces known as Fock spaces (symmetric Fock space $\mathfrak{F}_{sym}$ for bosons and antisymmetric Fock space $\mathfrak{F}_{antisym}$ for Fermions) which have the character of  a continuous tensor product Hilbert space. Modulo the specification of the statistics of the field, a quantum field has the character of a quantum version of white noise, while its integrated version can be viewed as a quantum  independent increment process.  Thus, exploiting the properties of smeared quantum fields, Hudson and Parthasarathy \cite{HP84} were able to develop a quantum stochastic calculus which is essentially a quantum version of the It\={o} stochastic calculus. 

The model we use to describe the system shown in Figure \ref{fig:fermion-model1}  
employs boson  and fermion fields $b(t)$ and $a(t)$, respectively, parametrized by time $t \in [0,\infty)$
which accounts for the time evolution of fields interacting with the system
(e.g. an atom or quantum dot) at a fixed spatial location. In the remainder of this section we describe the quantum
stochastic calculus that has been developed to facilitate modeling and
calculations involving these fields, \cite{HP84}, \cite{AH84}, \cite{GC85}, 
\cite{HP86}, \cite{KRP92}, \cite{GZ00}, \cite{BHJ07}, \cite{WM10}. Some basic aspects of quantum stochastic integrals and the quantum It\={o} rule are discussed in Appendix \ref{sec:app-stoch-quantum}.

The  boson field channel  $B$ in Figure \ref{fig:fermion-model1}  
is defined on a symmetric Fock space $\mathfrak{F}_{sym}$. The commutation relations for the boson field are 
$[b(t), b^\ast(t^{\prime}) ] = \delta(t-t^{\prime})$, from  (\ref{eq:boson}).
For a boson channel in a Gaussian state, the following singular expectations
may be assumed: 
\begin{eqnarray}
\langle b^\ast(t) b(t^{\prime}) \rangle = N \delta(t-t^{\prime}), \ \
\langle b(t) b^\ast(t^{\prime}) \rangle = (N+1) \delta(t-t^{\prime}), \\
\langle b(t) b(t^{\prime}) \rangle = M \delta(t-t^{\prime}), \ \ \langle
b^\ast(t) b^\ast(t^{\prime}) \rangle = M^\ast \delta(t-t^{\prime}).
\end{eqnarray}
Here $\langle X \rangle$ is a standard notation used to denote the quantum
expectation of a system operator $X$ \cite{EM98,AFP09} (i.e., $\langle X \rangle=\mathbb{E}[X]$), $N \geq 0$ is the average number of bosons, while $M$ describes
the amount of squeezing in the field state. We have the identity $|M|^2 \leq
N(1+N)$. For a thermal state, $M=0$ and 
\begin{equation}
N = \frac{1}{ e^{\beta (E-\mu)}-1} ,
\end{equation}
where $\beta =\frac{1}{k_B T}$ is the inverse temperature, $E$ is the
energy, and $\mu$ is the chemical potential.

In this paper we will assume $N=M=0$, which corresponds to the case of a
boson field in the vacuum (ground) state. The vacuum boson field is a
natural quantum extension of white noise, and may be described using the
quantum It\={o} calculus.  In this calculus, the integrated field processes $%
B(t)= \int_0^t b(s) ds$ (annihilation), $B^\ast(t)=\int_0^t b^\ast(s)ds$
(creation) and $\Lambda(t) = \int_0^t b^\ast(s)b(s) ds$ (counting) are used.
The non-zero It\={o} products for the vacuum boson field are 
\begin{equation}
d\Lambda(t) d\Lambda(t)=d\Lambda(t), \ \ d\Lambda(t) dB^\ast(t)=dB^\ast(t),
\ \ dB(t)d\Lambda_0(t)=dB(t), \ \ dB(t) dB^\ast(t)=dt.
\end{equation}

 We now specify the  fermion channels $A_0$ and $A_1$  in Figure \ref{fig:fermion-model1}.
We assume the followings singular expectations for a fermion field $A$, defined
on an antisymmetric Fock space $\mathfrak{F}_{antisym}$: 
\begin{eqnarray}
\langle a^\ast(t) a(t^{\prime}) \rangle &=& N \delta(t-t^{\prime}),\ \
\langle a(t) a^\ast(t^{\prime}) \rangle = (1- N) \delta(t-t^{\prime}), \\
\langle a(t) a(t^{\prime}) \rangle &=& M \delta(t-t^{\prime}),\ \ \langle
a^\ast (t) a^\ast(t^{\prime}) \rangle = M^\ast \delta(t-t^{\prime}).
\label{eqn:Fermi_cov}
\end{eqnarray}
In general we have $0 \leq N \leq 1$ along with the identity $|M|^2 \leq
N(1-N)$. For a thermal state we have $M=0$, and 
\begin{equation}
N = \frac{1}{ e^{\beta (E-\mu)}+1}.
\end{equation}

In what follows we take the zero temperature limit $T \to 0$. For fermion
channel 1 we assume the energy is such that $E < \mu$ and so in the zero
temperature limit this channel is fully occupied,  $N=1$, and    the It\={o} rule 
\begin{equation}
dA_1^\ast (t) dA_1(t)=dt
\end{equation}
applies for the corresponding integrated processes $A_1(t)= \int_0^t a_1(s) ds$, $%
A_1^\ast(t)=\int_0^t a_1^\ast(s)ds$.   For fermion
channel 0 we fix $E > \mu$, in which case $N=0$, describing a reservoir
which is unoccupied. The number process $\Lambda_0(t) = \int_0^t a_0^\ast(s)
a_0(s) ds$ is well defined for fermion channel 0 (but not for channel 1),
and the It\={o} table is 
\begin{equation}
d\Lambda_0(t) d\Lambda_0(t)=d\Lambda_0(t), \ \ d\Lambda_0(t)
dA_0^\ast(t)=dA_0^\ast(t), \ \ dA_0(t)d\Lambda_0(t)=dA_0(t), \ \ dA_0(t)
dA_0^\ast(t)=dt.
\end{equation}
The fermion channels are defined on distinct antisymmetric Fock spaces $%
\mathfrak{F}_{antisym}^{(1)}$, $\mathfrak{F}_{antisym}^{(0)}$.

\subsection{System Coupled to Boson and Fermion Fields}
\label{sec:model-system}

The system $S$ illustrated in Figure \ref{fig:fermion-model1} is defined on the Hilbert space $\mathfrak{H}_S$, and so the complete system coupled to the boson and fermion fields  is defined on the tensor product Hilbert space
\begin{equation}
\mathfrak{H}= \mathfrak{H}_S \otimes \mathfrak{F}_{sym} \otimes  \mathfrak{F}_{antisym}^{(1)} \otimes\mathfrak{F}_{antisym}^{(0)} .
\end{equation}
Due to the presence of fermion field channels, it is necessary to introduce a parity structure on the collection of operators on this tensor product space, as explained in Appendix  \ref{sec:mixed-structure}. We therefore have a
   parity operator $\tau$ on $\mathfrak{H}$  
such that for all operators $X$ and $Y$ on $\mathfrak{H}$ we have $\tau(XY)=\tau(X) \tau(Y)$ and $\tau(X^\ast)=\tau(X)^\ast$. Operators $X$ such that $\tau(X)=X$ are called {\em even}, while those for which $\tau(X)=-X$ are called {\em odd}. Fermion annihilation and creation operators are odd, while the fermion number operator is even. All boson operators are even. A system operator, i.e. an operator $X$ acting nontrivially on $\mathfrak{H}_S$ only, that is even will commute with all field operators, while an odd system operator will anticommute with odd fermion field operators.  All boson field operators commute with all system operators and all fermion field operators.

The Schr\"{o}dinger equation for the complete system is
\begin{eqnarray}
dU(t) &=& (  (S-I)d\Lambda(t)  + dB^\ast(t) L  - L^\ast S dB(t) -\frac{1}{2} L^\ast L dt
\nonumber
\\
&& + dA_1^\ast(t) L_1  - L_1^\ast dA_1(t) -\frac{1}{2} L_1 L_1^\ast dt
  \nonumber
\\
&&  +   (S_0-I)d\Lambda_0(t)  + dA_0^\ast(t) L_0  - L_0^\ast S_0 dA_0(t) -\frac{1}{2} L_0^\ast L_0 dt 
\nonumber
\\
&&
-iH dt
) U(t),  
\label{eq:qsde-U}
\end{eqnarray}
with initial condition $U(0)=I$. I 
The operators $S$, $L$, $H$, $S_0$, $L_1$ and $L_0$ are system operators, where
\begin{itemize}
\item 
$S$, $L$, $H$, $S_0$ are even (and thus also their adjoints), and 
\item
$L_1$ and $L_0$ are odd (and thus also their adjoints).
\end{itemize}
The operator $H$ is called the Hamiltonian, and it describes the behavior of the system in the absence of field coupling. The operators $S$, $L$, $S_0$, $L_1$ and $L_0$ describe how the field channels couple to the system ($S$ and $S_0$ are required to be unitary).
Note that  terms involving the creation and annihilation operators in (\ref{eq:qsde-U})  ensure a total  energy conserving exchange of energy between the system and the field channels; for example, an electron may transfer from the field to a quantum dot, and vice versa.
Consequences of the specified parity of the above operators and the fact that $U(0)=I$ is even is that $U(t)$ is even and hence commutes with all the It\={o} differentials, and, by the quantum It\={o} rule, is a unitary process (we have $dA_0^\ast L_0 = -L_0 dA_0^*$, $dA_1^*(t)L_1=-L_1dA_1^*(t)$, and  $dB^\ast L = LdB^*$, see Appendix \ref{sec:app-stoch-quantum}, equations 
(\ref{eq:inc-fermion-ac}) and (\ref{eq:inc-boson-c})).

\subsubsection{Heisenberg Picture Dynamics}
A system operator $X$ at time $t$ is given in the Heisenberg picture by $%
X(t)=j_{t}( X) =U( t) ^{\ast }  X U( t) $ and it follows from the
quantum It\={o} calculus and the commutation and anticommutation relations arising from the chosen parity 
that
\begin{eqnarray}
dj_{t}( X) &=&j_{t}(S^{\ast }XS-X) d\Lambda ( t) + dB(t) ^{\ast }    j_{t}(S^{\ast }[X,L]) 
+j_{t}([L^{\ast },X]S) dB( t) +j_{t}(\mathcal{L}(X)) dt 
 \notag \\
&& 
+  dA_1(t) ^{\ast }  j_{t} (\tau(X)L_1-L_1X)  
+j_{t}(L_1^{\ast}\tau(X)-XL_1^\ast) dA_1( t) +j_{t}(\mathcal{L}^\tau_1(X)) dt 
\nonumber \\
&& 
+ j_{t}(S_0^{\ast }XS_0-X) d\Lambda_0 ( t) +dA_0(t) ^{\ast }   j_{t}(S_0^{\ast }(\tau(X)L_0-L_0X)) 
+j_{t}((L_0^{\ast }\tau(X)-XL_0^\ast)S_0) dA_0( t) 
\nonumber \\
&& 
+j_{t}(\mathcal{L}_0^\tau(X)) dt  - i j_t([X,H]) dt ,
 \label{eq:X-dyn} 
\end{eqnarray}
where
\begin{eqnarray}
\mathcal{L}(X) & = & L^\ast X L - \frac{1}{2}XL^\ast L  - \frac{1}{2} L^\ast LX,
\\ 
\mathcal{L}^\tau_1(X) & = & L_1 \tau(X) L_1^\ast - \frac{1}{2}XL_1 L_1^\ast  - \frac{1}{2} L_1 L_1^\ast  X,
\\
\mathcal{L}^\tau_0(X) & = & L^{\ast }_0 \tau(X) L_0 - \frac{1}{2}XL_0^\ast L_0  - \frac{1}{2} L_0^\ast L_0 X ,
\end{eqnarray}
and in the case of even operators we shall just write $\mathcal{L}_i(X)= L^{\ast }_i X L_i - \frac{1}{2}XL_i^\ast L_i  - \frac{1}{2} L_i^\ast L_i X$, ($i=0,1$).

The boson and fermion output fields are defined by
\begin{eqnarray}
B_{out}(t) &=& U^\ast(t) B(t) U(t), \ \  \Lambda_{out}(t) = U^\ast(t) \Lambda(t) U(t), 
\\
A_{1, out}(t) &=& U^\ast(t) A_1(t) U(t),  
\\
A_{0, out}(t) &=& U^\ast(t) A_0(t) U(t), \ \ \Lambda_{0,out}(t) = U^\ast(t) \Lambda_0(t) U(t)
\end{eqnarray}
and satisfy the corresponding quantum stochastic differential equations (QSDEs)
\begin{eqnarray}
dB_{out}(t) &=& j_t(L) dt + j_t(S) dB(t),
\\
d\Lambda_{out}(t) &=& j_t(L^\ast L) dt + dB^\ast(t) j_t(S^*L) + j_t(L^\ast S) dB(t) + d\Lambda(t),
\\
dA_{1,out}(t) &=& j_t(L_1^\ast) dt + dA_1(t),
\\
dA_{0,out}(t) &=& j_t(L_0) dt + j_t(S_0) dA_0(t),
\\
d\Lambda_{0,out}(t) &=& j_t(L_0^\ast L_0) dt + dA_0^\ast(t) j_t(S_0^\ast L_0) + j_t(L_0^\ast S_0) dA_0(t) +   d\Lambda_0 (t).
\end{eqnarray}

\subsubsection{The State}
We define a state $\mathbb{E} [\cdot ]$ on the von Neumann algebra of observables to be an expectation, that is, a linear positive normalized map from the observables to the complex numbers;  positive meaning that $\mathbb{E} [X^*X] \geq 0$ for any observable $X$ and normalized meaning $\mathbb{E} [I] =1$, where $I$ is the identity operator. For technical reasons we require the state to be continuous in the normal topology, see for instance \cite{Meyer}. We shall assume that the state is a product state with respect to the system-environment decomposition: $\mathbb{E} [ X \otimes F ] \equiv \langle X \rangle_S \, \langle F \rangle_E$, for system observable $X$ and environment observable $F$. In particular we take $\langle \, \cdot \, \rangle_E$ to be the mean zero gaussian state with covariance (\ref{eqn:Fermi_cov}) and the choice of $N=1$ (the Fermi vacuum).

We say that the state is {\em even} if we have
\begin{equation}
\mathbb{E} \circ \tau = \mathbb{E},
\end{equation}
where $\tau$ is the parity operator that was introduced in Section \ref{sec:model-system}. Specifically, this forces all odd observables to have mean zero. In quantum theory, the observable quantities must be self-adjoint operators, however, it is not necessarily true that all self-adjoint operators are observable as there may exists so-called superselection sectors. In the present case, only the even self-adjoint operators are observables. We need to ignore states which lead to unphysical correlations between component systems, this is referred to a a superselection principle in the quantum physics literature \cite{Giulini}. We need therefore to restrict our interest to even states only. More specifically, we shall assume that the factor states $\langle \, \cdot \, \rangle_S$ and $\langle \, \cdot \, \rangle_E$ are separately even on the system and environment observables respectively.

The expected values of system operators $X$ evolve in time as follows. Define
 \begin{equation}
\mu_t(X) = \mathbb{E}[j_t(X) ].
\end{equation}
Then by taking expectations of (\ref{eq:X-dyn}) we find that for even observables $X$
 \begin{eqnarray}
\dot{\mu}_t(X) = \mu_t( \mathcal{L}(X)  + \mathcal{L}_1(X)  + \mathcal{L}_0(X) ),
\end{eqnarray}
which is called the {\em master equation}, and corresponds to the Kolmogorov equation (\ref{eq:X-mean-c}).
This may be expressed in Schr\"{o}dinger form using the density operator $\rho(t)$ defined by $\mu_t(X) = \mathrm{tr}[ \rho(t) X]$, which exists by our assumption of normal continuity of the state. The density operator is then an \emph{even} positive trace-class operator, normalized so that $\mathrm{tr}[ \rho(t) ]=1$, satisfying the equation
\begin{equation}
\dot{\rho}(t) = \mathcal{L}^\ast(\rho(t)) + \mathcal{L}_1^\ast(\rho(t))  + \mathcal{L}_0^\ast(\rho(t)),
\label{eq:master}
\end{equation}
where
\begin{eqnarray}
\mathcal{L}^\ast(\rho) & = & L \rho L^\ast - \frac{1}{2} L^\ast L\rho   - \frac{1}{2} \rho L^\ast L ,
\\ 
\mathcal{L}_1^\ast(\rho) & = &  L_1^\ast \rho L_1 - \frac{1}{2} L_1 L_1^\ast \rho  - \frac{1}{2} \rho L_1 L_1^\ast   ,
\\
\mathcal{L}_0^\ast(\rho) & = & L_0  \rho L_0^\ast - \frac{1}{2} L_0^\ast L_0\rho  - \frac{1}{2} \rho L_0^\ast L_0   .
\end{eqnarray}
 
\section{Fermion Filter}
\label{sec:filter}

In this section we suppose that electrons in fermion channel 0, after interaction with the system, can be continuously counted; that is, the observables $\Lambda_{0, out}(s)$, $0 \leq s \leq t$, are  measured. The problem is, given an even state $\mathbb{E}$ as outlined above, to determine estimates $\hat X(t)$ of system operators $X$ given the measurement record. This is a filtering problem involving a signal derived from a fermion field. As mentioned above only the even operators may be observable, and in fact the expectation and conditional expectation of all odd operators must vanish identically.

Mathematically, we wish to determine equations for the quantum conditional expectations
\begin{equation}
\hat X(t) = \pi_t(X) = \mathbb{E}[ j_t(X) \, \vert \, \mathscr{Y}_t ]  .
\label{eq:pi-def}
\end{equation}
Here, $X$ is a system operator, $\mathscr{Y}_t$ is the algebra generated by the operators $\Lambda_{0, out}(s)$, $0 \leq s \leq t$, a commutative von Neumann algebra, and $\pi_t$ is the conditional state. In quantum mechanics, conditional expectations are not always well defined due to the general lack of commutativity. However, the conditional expectation (\ref{eq:pi-def}) is well defined because $j_t(X)$ commutes with all operators in the algebra $\mathscr{Y}_t$. This is called the non-demolition property, and is a consequence of the system-field model, where fermion field channel 0  serves as a probe, see \cite{BHJ07}.  The quantum conditional expectation (\ref{eq:pi-def}) is characterized by the requirement that
\begin{equation}
\mathbb{E}[ j_t(X) Z ] = \mathbb{E}[ \pi_t(X) Z ] \ \mathrm{for \ all} \ Z \in \mathscr{Y}_t.
\label{eq:pi-char}
\end{equation}

\begin{theorem}  \label{thm:filter}
The quantum filter for the conditional expectation (\ref{eq:pi-def}) is given by $\pi_t (X)=0$ for odd observables, while for even observables satisfies the equation
\begin{eqnarray}
d \pi_t(X) &=& \pi_t( -i[X,H] + \mathcal{L}(X) +  \mathcal{L}_1(X)  + \mathcal{L}_0(X)  ) dt
\nonumber \\
&&
+ \biggl\{  \frac{\pi_t(L^\ast_0 X L_0)}{\pi_t(L_0^\ast L_0)} - \pi_t(X) \biggr\} dW(t)
\label{eq:fermion-filter}
\end{eqnarray}
where $W(t)$ is a $\mathscr{Y}_t$ martingale (innovations process)  given by
\begin{equation}
dW(t) = dY(t) - \pi_t(L_0^\ast L_0) dt, \ \ W(0)=0.
\end{equation}

\end{theorem}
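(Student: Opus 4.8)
The plan is to adapt the characteristic-function (test-process) method used above for the classical nonlinear filter (equations (\ref{eq:c-char-1}) and following) to the fermionic quantum setting, with the output counting process $Y(t)=\Lambda_{0,out}(t)$ playing the role of the observation. Before setting up the test process I would dispose of the odd observables. Since the number operator is even and $U(t)$ is even, the algebra $\mathscr{Y}_t$ consists of even operators and $j_t(X)$ is odd whenever $X$ is odd; for even $Z\in\mathscr{Y}_t$ the product $j_t(X)Z$ is then odd, so $\mathbb{E}[j_t(X)Z]=0$ because the state is even ($\mathbb{E}\circ\tau=\mathbb{E}$), and by the characterization (\ref{eq:pi-char}) this forces $\pi_t(X)=0$. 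Hence it suffices to treat even $X$, for which the noise coefficients in (\ref{eq:X-dyn}) simplify using $\tau(X)=X$, e.g. the $dA_0$-coefficient becomes $j_t([L_0^\ast,X]S_0)$ and the $dA_0^\ast$-coefficient $dA_0^\ast\,j_t(S_0^\ast[X,L_0])$.

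For even $X$ I would posit $d\pi_t(X)=F_t(X)\,dt+G_t(X)\,dY(t)$ with $F_t(X),G_t(X)\in\mathscr{Y}_t$ to be determined, and introduce the test process $C_t(f)$ solving $dC_t(f)=f(t)\,C_t(f)\,dY(t)$, $C_0(f)=I$, for square-integrable $f$; this process lies in $\mathscr{Y}_t$. Starting from $\mathbb{E}[j_t(X)C_t(f)]=\mathbb{E}[\pi_t(X)C_t(f)]$ and applying the quantum It\={o} product rule exactly as in the classical derivation gives $I+II+III=0$ with $I=\mathbb{E}[(d\pi_t(X)-dj_t(X))C_t(f)]$, $II=\mathbb{E}[(\pi_t(X)-j_t(X))dC_t(f)]$ and $III=\mathbb{E}[(d\pi_t(X)-dj_t(X))dC_t(f)]$. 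To evaluate these I would first establish the It\={o} algebra of the observation: from the output QSDE for $d\Lambda_{0,out}$ together with the channel-$0$ table one finds $dY\,dY=dY$, and a direct multiplication yields $dj_t(X)\,dY=j_t([L_0^\ast,X]L_0)\,dt+j_t(S_0^\ast XS_0-X)\,d\Lambda_0$ modulo increments of zero mean. Because channel $0$ is unoccupied ($N=0$), every field increment $dA_0$, $dA_0^\ast$, $d\Lambda_0$ has vanishing expectation against adapted operators, so under $\mathbb{E}$ only the $dt$-contributions survive; the non-demolition property (even system operators commute with $\mathscr{Y}_t$) then lets me move each $j_t(\,\cdot\,)$ past the $\mathscr{Y}_t$-factors and replace $j_t$ by $\pi_t$ via (\ref{eq:pi-char}).

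Collecting terms, $I$ is $f$-independent while $II+III$ carries the factor $f$, so each must vanish separately for all $f$ and all $C_t(f)$. The $f$-independent equation gives $F_t(X)=\pi_t(-i[X,H]+\mathcal{L}(X)+\mathcal{L}_1(X)+\mathcal{L}_0(X))-G_t(X)\,\pi_t(L_0^\ast L_0)$. The $f$-dependent equation, after using the feedback $dY\,dY=dY$ (which makes $G_t$ reappear with coefficient $\pi_t(L_0^\ast L_0)$) together with the algebraic identity $\pi_t(XL_0^\ast L_0)+\pi_t([L_0^\ast,X]L_0)=\pi_t(L_0^\ast XL_0)$, reduces to $G_t(X)\,\pi_t(L_0^\ast L_0)=\pi_t(L_0^\ast XL_0)-\pi_t(X)\,\pi_t(L_0^\ast L_0)$, whence the quotient $G_t(X)=\frac{\pi_t(L_0^\ast XL_0)}{\pi_t(L_0^\ast L_0)}-\pi_t(X)$. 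Substituting $F_t$ back and regrouping the $\pi_t(L_0^\ast L_0)\,dt$ terms into the $dY$ term produces the innovations form with $dW(t)=dY(t)-\pi_t(L_0^\ast L_0)\,dt$, which is precisely the claimed filter (\ref{eq:fermion-filter}).

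The main obstacle I anticipate is the fermionic quantum It\={o} computation of $dj_t(X)\,dY$ and $dY\,dY$: the coupling operators $L_0,L_0^\ast$ and the increments $dA_0,dA_0^\ast$ are odd, so reordering them produces parity signs that must be tracked consistently through the anticommutation relations and the parity map $\tau$, and a single sign slip would corrupt the numerator $L_0^\ast XL_0$ that distinguishes the counting filter. A secondary point requiring care is the legitimacy of the test-process argument itself, namely that varying $f$ over square-integrable functions genuinely separates the $dt$ and $dY$ coefficients and that all the manipulations respect the normal continuity of the state assumed in Section \ref{sec:model-system}.
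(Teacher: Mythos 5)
Your proposal is correct and follows essentially the same route as the paper: the characteristic-function method with the test process $C_t(f)$, the ansatz $d\pi_t(X)=F_t(X)\,dt+G_t(X)\,dY(t)$, the It\={o} products $dY\,dY=dY$ and $dj_t(X)\,dY$, and matching of the $f$-independent and $f$-dependent coefficients, yielding the same gain $G_t(X)=\pi_t(L_0^\ast XL_0)/\pi_t(L_0^\ast L_0)-\pi_t(X)$. The only (cosmetic) difference is that you dispose of odd observables upfront by the parity argument, whereas the paper carries $\tau(X)$ through the derivation and specializes to odd and even $X$ at the end; your explicit argument for $\pi_t(X)=0$ is in fact a useful elaboration of the paper's terse final step.
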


\begin{proof}
We  derive the filtering equation using 
the characteristic function method \cite{HSM05}, \cite{VPB93}, \cite{VPB92},
 whereby we postulate that the filter has
the form
\begin{equation}
d \pi_t( X) = \mathcal{F}_t(X) dt + \mathcal{H}_t(X)dY(t) ,
\label{eq:ansatz-1}
\end{equation}
where $\mathcal{F}_t$ and $\mathcal{H}_t$ are to be determined.

Let $f$ be square integrable, and define a process $c_f$ by
\begin{equation}
dc_f(t) = f(t) c_f(t) dY(t), \ \ c_f(0)=1.
\end{equation}
Then $ c_f(t)$ is adapted to $\mathscr{Y}_t$, and the
requirement (\ref{eq:pi-char})  implies that
\begin{equation}
\mathbb{E}[ X(t) c_f(t) ] = \mathbb{E}[ \pi_t(X)   c_f(t)  ]
\label{eq:Xc-1}
\end{equation}
holds for all $f$. We will use this relation to find $\mathcal{F}_t$ and $\mathcal{H}_t$.

The differential of the LHS of (\ref{eq:Xc-1}) is, using the quantum stochastic differential equation (\ref{eq:X-dyn}) and the quantum It\={o} rule,
\begin{eqnarray}
d \mathbb{E}[ X(t) c_f(t) ]  &=&  \mathbb{E}[ (d j_t(X)) c_f(t) + j_t(X) dc_f(t)  + dj_t(X) dc_f (t) ]
\\
&=& \mathbb{E}[ c_f(t)  j_t( \mathcal{L}(X) +  \mathcal{L}_1(X)  +  \mathcal{L}_0(X) ) dt + j_t(X) f(t) c_f(t) dY(t) 
\nonumber \\
&&
+ f(t) c_f(t) j_t(L_0^\ast \tau(X) L_0 - X L_0^\ast L_0) dt]
  \\
&=&  \mathbb{E}[ c_f(t)  j_t( \mathcal{L}(X) +  \mathcal{L}_1(X)  +  \mathcal{L}_0(X) ) dt  
+ f(t) c_f(t) j_t(L_0^\ast \tau(X) L_0  ) dt]
\end{eqnarray}
Now using the property (\ref{eq:Xc-1}) we find that
\begin{eqnarray}
\frac{d}{dt}  \mathbb{E}[ X(t) c_f(t) ] 
&=&  \mathbb{E}[ c_f(t)  \pi_t( \mathcal{L}(X) +  \mathcal{L}_1(X)  +  \mathcal{L}_0(X) )    
+ f(t) c_f(t) \pi_t(L_0^\ast \tau(X) L_0  )  ]
\label{eq:ansatz-2}
\end{eqnarray}

Next, the differential of the RHS of (\ref{eq:Xc-1}), using the ansatz (\ref{eq:ansatz-1}) and the quantum It\={o} rule, is
\begin{eqnarray}
\frac{d}{dt} \mathbb{E}[ \pi_t(X)   c_f(t)  ] &=& \mathbb{E}[ c_f(t) ( \mathcal{F}_t(X) + \mathcal{G}_t(X) \pi_t(L_0^\ast L_0) ) 
\nonumber \\
&&
+ f(t) c_f(t) (\pi_t(X) \pi_t( L_0^\ast L_0) + \mathcal{G}_t(X) \pi_t( L_0^\ast L_0)
) ]
\label{eq:ansatz-3}
\end{eqnarray}
Equating coefficients of $c_f$ and $f c_f$ in equations (\ref{eq:ansatz-2}) and (\ref{eq:ansatz-3}) gives the equations
\begin{eqnarray}
\pi_t( \mathcal{L}(X) +  \mathcal{L}_1(X)  +  \mathcal{L}_0(X) )     &=&  \mathcal{F}_t(X) + \mathcal{G}_t(X) \pi_t(L_0^\ast L_0) 
\\
\pi_t(L_0^\ast \tau(X) L_0  ) &=&
\pi_t(X) \pi_t( L_0^\ast L_0) + \mathcal{G}_t(X) \pi_t( L_0^\ast L_0)
\end{eqnarray}
from which the filter coefficients are readily determined, and we deduce the full filter equations
\begin{equation}
d \pi_t(X) = \pi_t( -i[X,H] + \mathcal{L}(X) +  \mathcal{L}_1(X)  + \mathcal{L}_0(X)  ) dt
+ \biggl\{  \frac{\pi_t(L^\ast_0 \tau(X) L_0)}{\pi_t(L_0^\ast L_0)} - \pi_t(X) \biggr\} dW(t).
\end{equation}
Taking $X$ to be odd and even in turn yields the result.
\end{proof}

\begin{corollary} \label{cor:filter}
Let $\rho_0$ be the initial even density matrix for the system, then in the Schr\"{o}dinger picture we may define the conditional density operator $\hat \rho(t)$ by 
 $\pi_t(X)=  \mathrm{tr}[\hat \rho(t)X]$, and obtain the filtering equation
\begin{eqnarray}
d\hat \rho(t)&=(\mathcal{L}^\ast(\hat \rho(t)) + \mathcal{L}_1^\ast(\hat \rho(t))  + \mathcal{L}_0^\ast(\hat \rho(t)) dt+
(\displaystyle{ \frac{L_0\hat \rho(t) L_0^*}{{\rm tr}(L_0 \hat \rho(t)L_0^*)}   }  -\hat \rho(t)) dW(t),
\label{eqn:SME}
\end{eqnarray}
with $\hat{\rho}(0) =\rho_0$.
\end{corollary}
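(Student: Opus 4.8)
The plan is to obtain the Schr\"{o}dinger-picture equation as the trace-dual of the Heisenberg-picture filter (\ref{eq:fermion-filter}) established in Theorem \ref{thm:filter}. First I would record that, since $\pi_t$ is for each $t$ a (random, $\mathscr{Y}_t$-measurable) state on the system algebra that is normal by the same continuity assumption used for the unconditional state, it is represented by a trace-class density operator $\hat\rho(t)$ through $\pi_t(X)=\mathrm{tr}[\hat\rho(t)X]$, with $\hat\rho(0)=\rho_0$. Because $\pi_t(X)=0$ for all odd $X$ by Theorem \ref{thm:filter}, and odd operators are traceless (from $\mathrm{tr}\,O=\mathrm{tr}\,\tau(O)=-\mathrm{tr}\,O$, the parity automorphism $\tau$ being trace-preserving), the operator $\hat\rho(t)$ is forced to be even; it is positive with unit trace since $\pi_t$ is a state.

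Next I would differentiate $\pi_t(X)=\mathrm{tr}[\hat\rho(t)X]$ for a fixed, time-independent even system operator $X$. Because $X$ carries no time dependence, $\pi_t(X)$ is a linear functional of $\hat\rho(t)$ and there are no It\={o} cross terms, so $d\pi_t(X)=\mathrm{tr}[d\hat\rho(t)\,X]$, reducing the task to rewriting each coefficient on the right-hand side of (\ref{eq:fermion-filter}) as a trace against $X$. The drift is handled by the adjoint relations $\mathrm{tr}[\hat\rho\,\mathcal{L}_i(X)]=\mathrm{tr}[\mathcal{L}_i^\ast(\hat\rho)X]$ and $\mathrm{tr}[\hat\rho(-i[X,H])]=\mathrm{tr}[(-i[H,\hat\rho])X]$, all of which follow from cyclicity of the trace together with the explicit formulas for $\mathcal{L}_i$ and $\mathcal{L}_i^\ast$; these reproduce the generator of the master equation (\ref{eq:master}) acting on $\hat\rho(t)$. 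For the innovations coefficient I would use $\pi_t(L_0^\ast X L_0)=\mathrm{tr}[\hat\rho\,L_0^\ast X L_0]=\mathrm{tr}[L_0\hat\rho L_0^\ast\,X]$ and $\pi_t(L_0^\ast L_0)=\mathrm{tr}[L_0\hat\rho L_0^\ast]$, so that
\[
\frac{\pi_t(L_0^\ast X L_0)}{\pi_t(L_0^\ast L_0)}-\pi_t(X)
=\mathrm{tr}\!\left[\left(\frac{L_0\hat\rho(t)L_0^\ast}{\mathrm{tr}[L_0\hat\rho(t)L_0^\ast]}-\hat\rho(t)\right)X\right].
\]
Since $L_0$ is odd, $L_0\hat\rho L_0^\ast$ is even, so the candidate increment stays even and remains consistent with the vanishing of $\pi_t$ on odd operators.

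Finally I would invoke non-degeneracy of the trace pairing: having shown that $\mathrm{tr}[d\hat\rho(t)\,X]$ equals the trace of the claimed right-hand side of (\ref{eqn:SME}) against $X$ for every even observable $X$, and since both $d\hat\rho(t)$ and the candidate are even trace-class operators, equality of these scalars for all such $X$ forces equality of the operators, giving (\ref{eqn:SME}). The main obstacle is not the algebra, which is routine cyclicity, but the functional-analytic justification at the two ends: first, that the conditional normal state is representable by an even, positive, trace-class $\hat\rho(t)$, and second, that the trace pairing restricted to the even sector is non-degenerate on the even trace-class operators, so that testing against all even $X$ determines the increment $d\hat\rho(t)$ uniquely.
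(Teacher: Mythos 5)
Your proposal is correct and is essentially the argument the paper intends: the corollary is stated there without an explicit proof, as the immediate trace-dual of Theorem \ref{thm:filter}, which is precisely the dualization you carry out (normality giving existence of $\hat\rho(t)$, cyclicity of the trace for the drift and the innovations gain, evenness of $\hat\rho(t)$ from the vanishing of $\pi_t$ on odd operators, and non-degeneracy of the trace pairing to identify the increment). Note in fact that your dualization correctly retains the $-i[H,\hat\rho(t)]$ commutator coming from (\ref{eq:fermion-filter}), a term that is missing from the stated equation (\ref{eqn:SME}) (and likewise from the master equation (\ref{eq:master})) --- an omission in the paper rather than a flaw in your argument.
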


\section{Examples}
\label{sec:examples}

In this section we consider several examples drawn from the literature. These examples are special cases of the model described above (Figure \ref{fig:fermion-model1}).

\subsection{Quantum Dot}
\label{sec:eq-dot}

We consider a quantum dot arrangement discussed in \cite[sec. 3]{GM00}, as shown in Figure \ref{fig:quantum-dot-1}. 
The left ohmic contact $L$ is assumed to be a perfect emitter, which we describe by a fermion field channel $A_1(t)$ (for which we have 
$dA_1^\ast (t)dA_1(t)= dt$), while the right ohmic contact $R$ is assumed to be a perfect absorber, given by a fermion field channel $A_0(t)$ ($dA_0 (t)dA_0^\ast(t)= dt$).
Current flows through the quantum dot by tunneling. The quantum dot is described by  annihilation and creation operators $c$ and $c^\ast$, respectively, satisfying $\{ c, c^\ast \}=1$ and $c^2=0$. The dot is coupled to the two fermi channels via the  operators $L_1= i \sqrt{\gamma_L} \, c$, $L_0 = i \sqrt{\gamma_R}\, c$, and $S_0=I$. Here, $\gamma_L$ and $\gamma_R$ are the tunneling rates across the left and right barriers, respectively. We take $H=0$, and   there is no boson field channel in this example. The parity is defined such that $c$ and $c^\ast$ are odd.

The dynamics of the quantum dot are described by a differential equation for $c$; from (\ref{eq:X-dyn}) we have 
\begin{equation}
dc(t) = -\frac{\gamma_L+\gamma_R}{2} c(t) dt + i \sqrt{\gamma_L}\, dA_1(t) + i \sqrt{\gamma_R}\, dA_0(t) .
\label{eq:c-dyn-1}
\end{equation}
Equation (\ref{eq:c-dyn-1}) is a linear quantum stochastic differential equation, with solution
\begin{eqnarray}
c(t) = e^{-\frac{\gamma_L+\gamma_R}{2}t} c + \int_0^t e^{-\frac{\gamma_L+\gamma_R}{2}(t-s)} (i \sqrt{\gamma_L}\, dA_1(s) + i \sqrt{\gamma_R}\, dA_0(s)).
\end{eqnarray}
Note, however, that this system is not Gaussian. The influence of the two fermion fields on the dot can be seen in these equations through the stochastic integral terms.

The output fields are given by
\begin{eqnarray}
dA_{1,out}(t) &=& -i  \sqrt{\gamma_L}\, c^\ast (t) dt + dA_1(t)
\\
dA_{0,out}(t) &=& i  \sqrt{\gamma_R}\, c (t) dt + dA_0(t)
\end{eqnarray}
and
\begin{eqnarray}
d\Lambda_{0,out}(t) = \gamma_R n(t) dt + i \sqrt{\gamma_R}\, dA_0^\ast (t) c(t) -i \sqrt{\gamma_R}\, c^\ast(t) dA_0(t) + d\Lambda_0(t),
\end{eqnarray}
where $n(t) = c^\ast(t) c(t)$ is the number operator (an even operator) for the quantum dot.  The output field components exhibit contributions for the dot and the input fields.
Using (\ref{eq:X-dyn}), the number operator $n(t)$ satisfies the equation
\begin{equation}
dn(t) = \gamma_L(1-n(t)) dt - \gamma_R n(t) dt + i \sqrt{\gamma_L}\, (c^\ast(t) dA_1(t) - dA_1^\ast(t) c(t) ) +
 i \sqrt{\gamma_R}\, (c^\ast(t) dA_0(t) - dA_0^\ast(t) c(t) ) 
\end{equation}

We turn now to the expected behavior of the quantum dot system, \cite{SM99}, \cite{GM00}. The differential equation for the unconditional density operator $\rho(t)$ is, from (\ref{eq:master}):
\begin{eqnarray}
\dot{\rho}(t) = \gamma_L (  c^\ast \rho(t) c  - \frac{1}{2} cc^\ast \rho(t)  - \frac{1}{2} \rho (t)  cc^\ast)
+ \gamma_R(  c\rho (t)  c^\ast - \frac{1}{2} c^\ast c \rho(t)  - \frac{1}{2} \rho (t)  c^\ast c ) .
\end{eqnarray}
The expected number of fermions in the dot is defined by $\bar{n}(t) = \mathbb{E}[ n(t) ]$, 
and satisfies the differential equation
\begin{equation}
\frac{ d \bar{n}(t)}{dt} = \gamma_L (1-\bar{n}(t)) - \gamma_R \bar{n}(t).
\end{equation}
In  steady state, the average  number of fermions in the dot is $\bar{n}_{ss} = \gamma_L/(\gamma_L + \gamma_R)$, reflecting an equilibrium balance of electron flow through the source and sink channels.

Now suppose that the current in the right contact is continuously monitored; this corresponds to the output field observable $\Lambda_{0,out}(t)$. We may condition on this information to obtain an estimate of the quantum dot occupation number,
$
\hat n(t) = \mathbb{E} [ n(t)  \vert  \mathscr{Y}_t ] = \pi_t(n).
$
Using (\ref{eq:fermion-filter}), the stochastic differential equation for $\hat n(t)$ is
\begin{eqnarray}
d \hat n(t) = \gamma_L (1- \hat n(t) ) dt - \gamma_R \hat n(t) dt - \hat n(t) ( dY(t) - \gamma_R \hat n(t)dt) .
\end{eqnarray}
It is worth comparing the form of this filtering equation to the classical Kalman filter (\ref{eq:kalman-1}), and the quantum filter for an atom monitored by a boson field, (\ref{eq:boson-filter-n}).  While the details of the dynamics differs in these cases, the filters share the same structure, with an additive correction term involving an innovations process. The \lq{gain}\rq \  in this correction term is not deterministic, unlike the case of the Kalman filter (\ref{eq:kalman-1}) for the conditional mean $\hat \xi(t)$.

\subsection{Photodetection}
\label{sec:detect}

A {\em photodetector} is a sensing device that produces an electronic current flow in response to light incident upon it. At the quantum level, a discrete output results from the arrival of a photon.
A simple model for a photodetector involving both boson and fermion fields is described in \cite[sec. 8.5]{GZ00}.   In this section we use this model to describe the detection of photons scattered from an atom, and then we show how to use the information in the electron flow to estimate atomic variables using a fermion filter.

A schematic representation of the detection of the photons  emitted by an atom is shown in Figure \ref{fig:photo-detector-1}. This figure illustrates that the output boson channel for the atom is fed into the input boson channel of the detector. The atom is modeled as a two level system on the Hilbert space $\mathbf{C}^2$ (Section \ref{sec:model-qm}) coupled to a boson field (Section \ref{sec:model-fields}). The atom has lowering and raising operators $\sigma_-$ and $\sigma_+=\sigma_-^\ast$, respectively, and our interest is in the atomic observable $n = \sigma_+\sigma_-$ 
counting the quanta in the atom (0 or 1).
The detector is modeled as a three-level system with Hilbert space $\mathbf{C}^3$ coupled to boson and fermion field channels. The analogs of the lowering and raising operators are the operators
\begin{equation}
\sigma_{jk} = \vert j \rangle \langle k \vert, \ \ j,k=1,2,3,
\end{equation}
where $\vert 1 \rangle$, $\vert 2 \rangle$ and $\vert 3 \rangle$ denote a basis for $\mathbf{C}^3$; thus  $\vert j \rangle = \sigma_{jk} \vert k \rangle$, as indicated by the arrows in Figure \ref{fig:photo-detector-1}.

The connection of the atom to the detector via the boson channel is an instance of a cascade or series connection, \cite{CWG93}, \cite{HJC93}, \cite{GJ09}. If the time delay between the components is small relative to the other timescales involved, then a single markovian model may be used for the combined atom-detector system. In this model, the operators $\sigma_{32}$ and $\sigma_{13}$ are odd, while 
$\sigma_-$, $\sigma_+$, $\sigma_{12}$, $\sigma_{11}$, $\sigma_{22}$ and $\sigma_{33}$ are even. The Hamiltonians for both subsystems is taken to be zero.
We take $H=\frac{i}{2} \sqrt{\kappa\gamma}\,(\sigma_{12}\sigma_+ - \sigma_{21} \sigma_-)$ (arising from the series connection), $S=S_0=I$ and set the coupling operators to be
$L=\sqrt{\kappa}\, \sigma_- + \sqrt{\gamma} \, \sigma_{12}$, $L_0 = \sqrt{\gamma_{0}}\, \sigma_{32}$, $L_1 = \sqrt{\gamma_{1}}\, \sigma_{31}$.
The quantum stochastic  equations of motion for  the atom are
   \begin{eqnarray}
d \sigma_-(t) &=& -\frac{\kappa}{2} \sigma_-(t) +   \sqrt{\kappa}\, (2 n(t)-I) dB(t)
\\
d n(t) &=& -\kappa n(t) dt - \sqrt{\kappa}\, ( dB^\ast(t) \sigma_-(t) + \sigma_+(t) dB(t))
\end{eqnarray}
and for the detector 
\begin{eqnarray}
d \sigma_{11}(t) &=& (\gamma \sigma_{22}(t) + \gamma_1 \sigma_{33}(t) )dt
+\sqrt{\kappa\gamma}\, ( \sigma_+(t) \sigma_{12}(t) + \sigma_{21}(t) \sigma_-(t)
)dt 
\nonumber \\ 
&&
+ \sqrt{\gamma}\, (d B^\ast(t) \sigma_{12}(t) + \sigma_{21}(t) d B(t))
-\sqrt{\gamma_1}\, (dA_1^\ast(t) \sigma_{31}(t) - \sigma_{13}(t) dA_1(t))
\\
d \sigma_{22}(t) &=& - (\gamma  + \gamma_0) \sigma_{22}(t) dt 
 - \sqrt{\kappa\gamma}\, ( 
\sigma_+(t) \sigma_{12}(t) + \sigma_{21}(t) \sigma_-(t)
)dt
\nonumber \\ 
&&
- \sqrt{\gamma}\, (d B^\ast(t) \sigma_{12}(t) - \sigma_{21}(t) d  B(t))
-\sqrt{\gamma_0}\, (dA_0^\ast(t) \sigma_{32}(t) - \sigma_{23}(t) dA_0(t))
\\
d \sigma_{33}(t) &=& \gamma_0 \sigma_{22}(t)dt  - \gamma_1 \sigma_{33}(t) dt
\nonumber \\ 
&&
+\sqrt{\gamma_0}\, ( dA_0^\ast(t) \sigma_{32}(t) + \sigma_{23}(t) dA_0(t))
+\sqrt{\gamma_1}\, ( dA_1^\ast(t) \sigma_{31}(t) + \sigma_{13}(t) dA_1(t))
\\
d \sigma_{12}(t) &=&  - \frac{1}{2} (\gamma + \gamma_0) \sigma_{12}(t) dt 
+  \frac{1}{2}\sqrt{\kappa\gamma}\, (\sigma_{22}(t)-\sigma_{11}(t)) \sigma_-(t) dt
\nonumber \\
&&
+\frac{1}{2}\sqrt{\gamma}\, (\sigma_{22}(t)-\sigma_{11}(t))d  B(t)
-\sqrt{\gamma_0}\, \sigma_{13}(t) dA_0(t)
+\sqrt{\gamma_1}\, \sigma_{32}(t) dA_1^*(t)
\\
d \sigma_{32}(t) &=& - \frac{1}{2} (\gamma + \gamma_0+\gamma_1) \sigma_{32}(t) dt 
 - \sqrt{\kappa\gamma}\, \sigma_{31}(t) \sigma_-(t) dt
\nonumber \\
&&
-\sqrt{\gamma}\, \sigma_{31}(t) d B(t)
-\sqrt{\gamma_0}\, ( \sigma_{22}(t) + \sigma_{33}(t) ) dA_0(t)
-\sqrt{\gamma_1}\, \sigma_{12}(t) dA_1(t)
\\
d \sigma_{13}(t) &=&  -\frac{\gamma_1}{2} \sigma_{13}(t) dt
 + \sqrt{\kappa\gamma}\, \sigma_{23}(t) \sigma_-(t) dt
 \nonumber \\
 &&
+\sqrt{\gamma}\, \sigma_{23}(t) d  B(t)
-\sqrt{\gamma_0}\, \sigma_{12}(t) dA_0^\ast(t)
-\sqrt{\gamma_1}\, ( \sigma_{11}(t)+  \sigma_{33}(t)  ) dA_1^\ast (t)
\end{eqnarray}
The number operator for the output of fermion channel $0$ evolves according to
\begin{equation}
d\Lambda_{0,out}(t) = \gamma_0 \sigma_{22}(t) dt + \sqrt{\gamma_0}\, ( dA_0^\ast(t) \sigma_{12}(t) + \sigma_{21}(t) dA_0(t))
+ d\Lambda_0(t)
\end{equation}

The mean value $\bar n (t) = \mathbb{E}[ n(t) ]$ evolves according to
\begin{equation}
\dot{\bar n}(t) = -\kappa \bar n(t) ,
\end{equation}
and so $\bar n(t) \to 0$ as $t \to \infty$. Thus initial quanta in the atom are lost to the fields.
 
To calculate the conditional mean $\hat n(t) = \mathbb{E}[ n(t) \vert \mathscr{Y}_t ]$, we will make use of the notation
 \begin{equation}
\sigma_{jk\alpha\beta} = \sigma_{jk} \sigma_{\alpha} \sigma_{\beta}
\end{equation}
where $j,k=1,2,3$ and $\alpha,\beta=-,+$ and the fact that atomic operators commute with detector operators.
The conditional mean $\hat n(t)$ is given by the system of equations
 \begin{eqnarray}
d \hat n(t) &=& -\kappa \hat n(t) dt + \biggl(\frac{\hat \sigma_{22+-}(t)}{\hat \sigma_{22}(t)} - \hat n(t)\biggr) dW(t)
\\
d \hat \sigma_{22}(t) &=& -(\gamma + \gamma_0) \hat \sigma_{22}(t) dt - \sqrt{\kappa\gamma}\, (\hat \sigma_{12+}(t) + \hat \sigma_{12+}^\ast(t) ) dt -\hat \sigma_{22}(t) dW(t)
\\
d \hat \sigma_{12+}(t) &=& -\frac{1}{2}(\kappa+ \gamma + \gamma_0) \hat \sigma_{12+}(t) dt - \sqrt{\kappa\gamma}\, \hat \sigma_{11+-}(t) dt -\hat \sigma_{12+}(t) dW(t)
\\
d \hat \sigma_{11+-}(t) &=&   -\kappa  \hat \sigma_{11+-}(t) dt + \gamma  \hat \sigma_{22+-}(t) dt + \gamma_1  \hat \sigma_{33+-}(t) dt 
-\hat \sigma_{11+-}(t) dW(t) \ \ 
\\
d \hat \sigma_{22+-}(t) &=&   -(\kappa+ \gamma + \gamma_0) \hat \sigma_{22+-}(t) dt -\hat \sigma_{22+-}(t) dW(t)
\\
d \hat \sigma_{33+-}(t) &=&  -(\kappa + \gamma_1) \hat \sigma_{33+-}(t) dt + \gamma_0 \hat \sigma_{22+-}(t) + 
 \biggl(\frac{\hat \sigma_{22+-}(t)}{\hat \sigma_{22}(t)} - \hat \sigma_{33+-}(t)\biggr) dW(t) 
\end{eqnarray}
where the innovations process is given by
\begin{equation}
dW(t) = dY(t) - \gamma_0 \hat \sigma_{22}(t) dt.
\end{equation}
Note that the filter involves estimates of variables associated with the detector.

\section{Conclusion}
In this paper, using the boson and fermion quantum stochastic calculus we have derived the quantum filtering equations for a class of open quantum systems (which may be fermionic) that are coupled to both bosonic and fermionic fields, for the case where the measurement (driving the filter) is that of counting of electrons in a fermionic field. For illustration, the filtering equations were calculated for two examples of estimating the number of electrons in a quantum dot coupled to an electron source and sink, and that of counting the photons emitted by a two level atom via  a photodetector which is modelled as a fermionic three level system. For both of these examples we find that the resulting set of filtering equations is closed (i.e., the  quantum filter is completely determined by a finite number of coupled matrix stochastic differential equations). 
 
 \appendix
 
 \section{Stochastic Calculus}
 \label{sec:app-stoch}

 \subsection{Classical}
\label{sec:app-stoch-classical}

Let   $w(t)$ be a Brownian motion (Wiener process). This means that $w(t)$ is an independent increment process and $w(t)-w(s)$ is a Gaussian  random variable with zero mean and variance $t-s$.
Suppose that
 $\alpha(t)$ is an adapted  stochastic process, that is,   
 $\alpha(t)$ is independent of $w(r)$ for all $r > t$; in particular, the It\={o} increment $dw(t) = w(t+dt)-w(t)$ is independent of $\alpha(t)$.
 The It\={o} stochastic integral of $\alpha$ with respect to $w$ 
 is  defined as a limit involving  forward  increments:
 \begin{equation}
I(t) = \int_0^t \alpha(s) dw(s) \approx \sum_j \alpha(s_j) (w(s_{j+1}) -w(s_j)) ,
\end{equation}
where, $s_0=0 < s_1 < s_2 < \cdots \leq t$. 
Since the Wiener process has zero mean, so does the stochastic integral:
$
\mathbf{E}[ I(t) ] =0.
$

Now suppose we have two stochastic integrals
 \begin{equation}
I(t) = \int_0^t \alpha(s) dw(s)  \ \text{and} \ J(t) = \int_0^t \beta(s) dw(s) ,
\end{equation}
where $\alpha$ and $\beta$ are adapted. Then the {\em It\={o} product rule} says that
\begin{equation}
I(t)J(t) = \int_0^t (J(s) \alpha(s) + I(s) \beta(s)) dw(s) + \int_0^t \alpha(s) \beta(s) ds,
\end{equation}
which is the sum of a stochastic integral  and conventional (say Lebesgue) integral, called the It\={o} correction term (the last term on the right hand side).   The It\={o} correction term is  not present in the usual product rule.

Stochastic integrals and the product rule are often expressed in differential form. Indeed, we may write
\begin{equation}
dI = \alpha dw \ \text{and} \ dJ = \beta dw,
\end{equation}
so that
\begin{eqnarray}
d(IJ) &=&  (dI)J + I(dJ) + (dI)(dJ)
\\
&=& J \alpha dw + I\beta dw + \alpha\beta dt ,
\end{eqnarray}
where we see that the correction term arises from the It\={o} rule
 \begin{equation}
dw(t) dw(t) = dt
\end{equation}
which is very useful in calculations.

 \subsection{Quantum}
\label{sec:app-stoch-quantum}

Let   $B(t)$, $B^\ast(t)$ be    boson annihilation and creation operators, as discussed in Section \ref{sec:model-fields}. For simplicity here we ignore the number (counting) field  operators and assume   $M=0$ and $N=0$ (vacuum field state).

 Let $\alpha_1(t)$ and $\alpha_2(t)$ be operator-valued adapted  processes, i.e.  independent of future field operators.  Quantum It\={o}  stochastic integrals are defined in terms of forward increments:
\begin{eqnarray}
I(t) &= & \int_0^t \alpha_1(s) dB(s) + \int_0^t \alpha_2(s) dB^\ast(s) 
\notag \\
& \approx & \sum_j \alpha_1(s_j) ( B(s_{j+1}) -B(s_j)) +   \sum_j \alpha_2(s_j) ( B^\ast(s_{j+1}) -B^\ast(s_j))  .
\end{eqnarray}
The expected value of the above stochastic integral is zero in the vacuum state $\vert \phi \rangle$:
$
\mathbb{E}_{\phi} [ I(t) ] =0.
$

  The quantum It\={o} rule is expressed in terms of four products
\begin{eqnarray}
dB(t) dB(t) = 0, \ \ dB(t) dB^\ast(t) = dt, \
dB^\ast(t) dB(t) = 0, \ \ dB^\ast(t) dB^\ast(t) = 0.
\end{eqnarray}
This It\={o} table is valid for vacuum and coherent field states. It\={o} tables for squeezed and thermal field states have more non-zero terms, as mentioned in Section \ref{sec:model-fields}.
An important property is that for an adapted process $\alpha(t)$, we have
\begin{equation}
[ \alpha(t), dB(t) ] =0=[ \alpha(t), dB^\ast(t) ] .
\label{eq:inc-boson-c}
\end{equation}

Now suppose we have two quantum stochastic integrals,  $I(t)$ defined above and
\begin{equation}
J(t) = \int_0^t \beta_1(s) dB(s) + \int_0^t \beta_2(s) dB^\ast(s)  .
\end{equation}
The product rule is
\begin{eqnarray}
d(IJ) &=&  (dI)J + I(dJ) + (dI)(dJ)
\\
&=&  \alpha_1 J dB + \alpha_2 J dB^\ast + I \beta_1 dB + I \beta_2 dB^\ast + \alpha_1 \beta_2 dt
\\
&=& ( \alpha_1 J   + I \beta_1 )dB + (  \alpha_2 J     + I \beta_2) dB^\ast + \alpha_1 \beta_2 dt;
\end{eqnarray}
that is,
\begin{eqnarray}
I(t) J(t) &=&  \int_0^t ( \alpha_1(s) J(s)   + I(s) \beta_1(s) )dB(s) + \int_0^t (  \alpha_2(s) J(s)     + I(s) \beta_2(s)) dB^\ast(s) 
\notag \\
&& + \int_0^t \alpha_1(s) \beta_2(s) ds.
\end{eqnarray}
The last term in this expression is the It\={o} correction terms, and arises from the non-zero It\={o} product $dB(t) dB^\ast(t)=dt$.
Note that the order is important in these expressions, since the expressions involve quantities that need not commute.

Quantum stochastic integrals with respect to fermion fields may also be defined, \cite{AH84}, \cite{CWG04}. However, some extra effort is required to keep track of antisymmetric tensor products and parity, matters that are explained in Appendix \ref{sec:mixed-structure}.  Here we mention that for an {\em odd} adapted process $\beta(t)$ and a fermion field $A(t)$, $A^\ast(t)$ we have the anti-commutation relations
 \begin{equation}
\{   \beta(t), dA(t) \} =0= \{ \beta(t), dA^\ast(t) \} .
\label{eq:inc-fermion-ac}
\end{equation}

 \section{Mixed Boson-Fermion Systems and Generalized Antisymmetric Tensor Product of Operators}
\label{sec:mixed-structure}

The laws of quantum physics dictate that  fermionic operators from independent systems must {\em anticommute} with one another. That is, if $X_j$ is any fermionic operator on a system $j$ with Hilbert space $\mathfrak{H}_j$
for $j=1,2$ then we must have $\{X_1,X_2\}=X_1X_2 +X_2X_1=0$. Realizing this property is not trivial because typically when one has a composite system with Hilbert space $\mathfrak{H}_1  \otimes \mathfrak{H_2}$ (here $\otimes$, depending on the context, denotes the tensor product between two Hilbert spaces or the tensor product between two Hilbert space operators) the operator $X_1$, originally defined only on $\mathfrak{H}_1$, would be identified with its ampliation $X_1 \tensor I$ on the composite Hilbert space, whereas $X_2$ would be identified with $I \otimes X_2$. This is  problematic for fermionic systems  since then we would have that $\{X_1 \otimes I,I  \otimes X_2\} = 2 X_1 \otimes X_2$, which is not necessarily $0$ for arbitrary fermionic operators $X_1$ and $X_2$. To resolve this issue and get these operators to anticommute correctly, the Hilbert spaces of fermionic systems are endowed with  some  additional structure and the usual tensor product between operators must be replaced with the antisymmetric tensor product between operators. We briefly explain this below, for a more detailed account see, for instance,   \cite{AH84}.

Let $\mathfrak{H}$ be the Hilbert space of a fermionic system. Then it is required that there exists an orthogonal decomposition  $\mathfrak{H}=\mathfrak{H}_+ \oplus \mathfrak{H}_-$  where the subspace $\mathfrak{H}_+$ is called the {\em even subspace}, while $\mathfrak{H}_-$ is called the 
{\em odd subspace}. An operator $X$ on $\mathfrak{H}$ is said to be even if it leaves the even and odd subspaces invariant $X \mathfrak{H}_{\pm} \subset  \mathfrak{H}_{\pm}$ and odd if it maps vectors of the even subspace into the odd subspace and vice-versa $X \mathfrak{H}_{\pm} \subset  \mathfrak{H}_{\mp}$. 
%That is, $X$ is even if it leaves the even and odd subspaces invariant and odd if it maps elements of the even or odd subspace to elements of the other.  
In particular the 
fermionic operators (or degrees of freedoms) are odd operators.

Let $P_+$ and $P_-$ denote the orthogonal projection onto the odd and even subspace, respectively. Define the linear {\em parity operator} $\theta = P_+-P_-$ and the linear {\em parity superoperator} $\tau(X) = \theta X \theta$ for any bounded operator $X$ on $\mathfrak{H}$ . 
It is easily checked from the definition that $\theta$ and $\tau$ have the following properties:
\begin{enumerate}
\item $\theta^{*} =\theta$.

\item $\theta^{*}\theta = \theta^2 =I$.

\item $\tau(XY) = \tau(X)\tau(Y)$
\item $\tau(X^*) = \tau(X)^*$ 
\item If $X$ is even then $\tau(X)=X$, while if $X$ is odd then $\tau(X)=-X$.
\end{enumerate}
Note that properties $3$ to $5$ imply that $\tau$ is an automorphism on $\mathcal{B}(\mathfrak{H})$ (the space of all bounded operators on $\mathfrak{H}$). For an operator $X$ with definite parity 
(i.e., it is either even or odd) we define the binary-value parity  functional 
$\delta(X)$ as  $\delta(X)=0$ if $X$ is even and $\delta(X)=1$ if $X$ is odd. Thus for any operator with a definite parity we have that $\tau(X) = (-1)^{\delta(X)}X$. In general an operator $X$ has a decomposition into odd part and even parts  as $X=X_+ + X_-$, where $X_+$ and $X_-$ are even and odd operators, respectively, given by:
\begin{align*}
X_+ &=  P_+ X P_+ + P_- X P_- \\
X_- &=  P_+  X P_- + P_-  X P_+ .
\end{align*}
Thus 
$$
\tau(X) = \tau(X_+) + \tau(X_-)= (-1)^{\delta(X_+)} X_+ + (-1)^{\delta(X_-)}X_-= X_+ - X_-.
$$
%To describe the composition of independent fermionic systems we need to specify a joint (composite) Hilbert space, and the action of individual system operators on this space.

We can now define the composition of independent 
fermionic systems $(\mathfrak{H}_{1}, \theta_{1})$ and 
$(\mathfrak{H}_{2},\theta_{2})$. The composite system has Hilbert space  $\mathfrak{H}_{12}=\mathfrak{H}_1 \otimes \mathfrak{H}_2$ and parity operator  $\theta_{12} = \theta_1 \otimes \theta_2$ so that its even and odd subspaces are  $H_{12+} = H_{1+} \otimes H_{2+} \oplus  H_{1-} \otimes H_{2-}$ and $H_{12-} = H_{1-} \otimes H_{2+} \oplus H_{1+} \otimes H_{2-}$. 
%For any operator $X$ on $\mathfrak{H}_{12}$ we define $\tau(X)=\theta_{12} X \theta_{12}$.
%If the systems $j$, $j=1,2$, are fermionic with Hilbert spaces $\mathfrak{H}_j$,  then the composite fermionic system consisting of system 1 and 2 has the Hilbert space $\mathfrak{H}_{12}=\mathfrak{H}_1 \otimes \mathfrak{H}_2$ and is endowed with a fermionic structure defined by the even subspace $H_{12+} = H_{1+} \otimes H_{2+} \oplus  H_{1-} \otimes H_{2-}$, odd subspace $H_{12-} = H_{1-} \otimes H_{2+} \oplus H_{1+} \otimes H_{2-}$, parity operator $\theta_{12} = \theta_1 \otimes \theta_2$, and parity superoperator $\tau(X)=\theta_{12} X \theta_{12}$ for any linear operator $X$ on $\mathfrak{H}_{12}$. 
The key question is how to ampliate the action of operators $X_{j}$ acting on the individual Hilbert spaces $\mathfrak{H}_{j}$ to the composite Hilbert space, so as to satisfy the desired fermionic anti-commutation relations. For this we define the antisymmetric tensor product $\hat{\otimes}$ between $X_1$ and $X_2$ as 
%Now we come to a key operation that is needed to properly define anticommuting fermionic operators. If $X_j$ is an operator of system $j$ acting on $\mathfrak{H}_j$ then the antisymmetric tensor product $\hat{\otimes}$ between $X_1$ and $X_2$ is defined as 
$$
X_1 \hat{\otimes} X_2= X_1 \otimes X_{2+} + X_{1} \theta_1 \otimes X_{2-} .
$$ 
%From the definition we have that if at least one of $X_1$ or $X_2$ is even then  $X_1 \hat{\otimes} X_2=X_1 \otimes  X_2$, while if both are odd then $X_1 \hat \otimes X_2 = - X_1 \otimes X_2$ (using the fact that $X_1 \theta_1 =\theta_1 \tau_1(X_1)$). 
The ampliation of an operator $X_{1}$ of the first system is $X_{1}\hat{\otimes} I= X_{1}\otimes I$, and the ampliation of an operator $X_{2}$ of the second system is $I \hat{\otimes} X_2= I \otimes X_{2+} + \theta_1 \otimes X_{2-}$, such that 
$X_1 \hat{\otimes} X_2 =(X_{1}\hat{\otimes} I )(I \hat{\otimes} X_2)$ . In particular if both $X_1$ and $X_2$ are odd then $\{ X_1 \hat{\otimes} I,I \hat{\otimes} X_2\}=0$ (the ampliations anticommute), while if at least one of them is even then $[X_1 \hat{\otimes} I,I \hat{\otimes} X_2]=0$ (the ampliations commute). 
Note that the ampliation preserves  the original parities of the system operators, e.g. $ \tau_{12} ( I \hat{\otimes} X_{2}) =\theta_{12} (I \hat{\otimes} X_{2})\theta_{12}=  I\hat{\otimes}  \tau_{2}(X_{2}) $.   

%Instead of the usual identification of $X_2$ with $I \otimes X_2 $ we identify it with the operator $I \hat{\otimes} X_2= I \otimes X_{2+} - \theta_1 \otimes {X_2-}$, while  the ampliation of $X_1$ coincides with the usual one since $X_1 \hat{\otimes} I = X_1 \otimes I$ the
%
%
%It now follows that if both $X_1$ and $X_2$ are odd then $\{ X_1 \hat{\otimes} I,I \hat{\otimes} X_2\}=0$ (the ampliations anticommute), while if at least one of them is even then $[X_1 \hat{\otimes} I,I \hat{\otimes} X_2]=0$ (the ampliations commute). 

%Note that the when an operator is even (odd) then its ampliation will have the same property on the composite Hilbert space. Also note that for any even (odd) operator its adjoint has the same property.

Now, if $(\mathfrak{H}_3, \theta_{3})$ is a third fermionic system we can 
show that $(X_1 \hat{\otimes} X_2) \hat{\otimes} X_3 = X_1 \hat{\otimes} ( X_2 \hat{\otimes} X_3)$ so that the antisymmetric tensor product is associative and $X_1 \hat{\otimes} X_2 \hat{\otimes} X_3$ is unambigously defined. By repeating the above procedure we can define the composition of an arbitrary number of systems $(\mathfrak{H}_j, \theta_{j}) $, $j=1,2,\dots,n$   such that 
$\hat{\otimes}_{j=1}^n X_j$ is well defined for arbitrary operators $X_j$ on $\mathfrak{H_j}$. For simplicity we  identify each $X_j$ with its ampliation to $\otimes_{j=1}^n \mathcal{H}_j$ with respect to the antisymmetric tensor product $\hat{\otimes}$, so that if $j \neq k$ then $\{X_j,X_k\}=0$ whenever both $X_j$ and $X_k$ are odd, while $[X_j,X_k]=0$ whenever one of them is even.

%
%and $X_3$ is an operator on this Hilbert space we can show that $(X_1 \hat{\otimes} X_2) \hat{\otimes} X_3 = X_1 \hat{\otimes} ( X_2 \hat{\otimes} X_3)$ so that the antisymmetric tensor product is associative and $X_1 \hat{\otimes} X_2 \hat{\otimes} X_3$ is unambigously defined. 
%By induction we can define the composition of an arbitrary number of fermionic systems. 
%with Hilbert space $\otimes_{j=1}^n \mathcal{H}_j$  
%
%induction that for any number of Hilbert spaces $\mathfrak{H}_j$, $j=1,2,\dots,n$,  and operators $X_j$ on $\mathfrak{H_j}$, that $\hat{\otimes}_{j=1}^n X_j$ is well defined. By  identifying each $X_j$ with its ampliation to $\otimes_{j=1}^n \mathcal{H}_j$ with respect to the antisymmetric tensor product $\hat{\otimes}$, we easily check that if $j \neq k$ then $\{X_j,X_k\}=0$ whenever both $X_j$ and $X_k$ are odd, while $[X_j,X_k]=0$ whenever one of them is even.
      
In the setting of this paper, we also consider composite systems consisting of fermionic and non-fermionic (bosonic) sub-systems. The composition can be described in the framework of the anti-symmetric tensor product by endowing the non-fermionic systems with a {\it trivial} parity structure. Indeed according to quantum physics,  bosonic operators from one system commute with all operators (even or not)  from the other systems, so they can be interpreted as ``even'' operators. This can be achieved by defining the parity operator of a bosonic system as $\theta=I$ (i..e., take $P_+=I$ and $P_-= 0$)  so that $\tau(X)= X$ for all bosonic operators. 
%For describing   system composed of mixed fermionic and bosonic systems is to endow the non-fermionic systems with a structure similar to the fermionic ones and defining a generalized anti-symmetric tensor product of operators. 
%
%
%
%For a non-fermionic system there is no physical motivation for labelling operators as being  even or odd, and a consistent viewpoint is  to simply label all operators on non-fermionic systems as being ``even". Indeed, in quantum physics  non-fermionic operators from one system commute with all operators (even or not)  from the other systems. One approach for describing   system composed of mixed fermionic and bosonic systems is to endow the non-fermionic systems with a structure similar to the fermionic ones and defining a generalized anti-symmetric tensor product of operators. To do this, suppose that $\mathcal{H}$ is the Hilbert space of a non-fermionic system. Since all operators on $\mathcal{H}$  are identified as being  even, we define a trivial and degenerate  ``parity operator" $\theta$ and ``parity superoperator"  $\tau$ as $\theta=I$ (i..e., take $P_+=I$ and $P_-= 0$) and $\tau(X)=\theta X \theta = X$. Note that $\theta$ and $\tau$ defined in this way also have the properties 1-5 that was listed above for fermionic systems. Thus  a non-fermionic system is viewed as  a degenerate  fermionic system with parity operator and superoperator as just defined.  
In this way we can extend the definition of  $\hat{\otimes}$ {\em mutatis mutandis} to mixtures of fermionic  and non-fermionic systems  in the way we had done above for fermionic systems.  For instance, let $\mathfrak{H}_1$ and $\mathfrak{H}_3$ be the Hilbert spaces of two non-fermionic systems and $\mathfrak{H}_2$ is the Hilbert space of a fermionic system. Consider the composite system on $\mathfrak{H}_1 \otimes \mathfrak{H}_2$ and let $X_j$ be an operator on $\mathfrak{H}_j$. Then we have that  $X_1 \hat{\otimes} X_2 = X_1 \otimes X_{2+} + X_1 \otimes X_{2-}=X_1 \otimes X_2$. Similary,  for the composite system on $\mathfrak{H}_2 \otimes \mathfrak{H}_3$ we have $X_2 \hat{\otimes} X_3= X_2 \otimes X_3$. 
That is, the generalized antisymmetric tensor product between two operators from distinct systems of which one is fermionic and the other is not, reduces to the usual tensor product between operators. On the other hand, if the two operators are from two distinct fermionic systems then it  reduces to the antisymmetric tensor product for operators of fermionic systems.  This is  exactly how it should be: non-fermionic operators from one system commute with all operators  from the other systems, and the fermionic ones anti-commute with fermionic operators from the other systems. By inspection it is easy to see that the generalized antisymmetric tensor product defined in this way has the associative property, hence it is unambiguously defined for systems that are the composite of any finite number of fermionic and non-fermionic systems.

%\begin{acknowledgement}
{\em Acknowledgement:}
JG would like to acknowledge kind hospitality the Australian National University during a research visit in February 2010 where part of this paper was written, along with the support of the UK Engineering and Physical Sciences Research Council, under project EP/G039275/1, and the Australian Research Council. MRJ and HIN would likewise acknowledge the support of the Australian Research Council (including APD fellowship support for HIN), and 
of EPSRC project EP/H016708/1, for research visits to Abersystwyth University. MG would like to acknowledge kind hospitality the Australian National University during a research visit in July 2007, and  the support of an EPSRC Fellowship EP/E052290/1.
%\end{acknowledgement}
 
\bibliographystyle{plain}

%\bibliography{mjbib2004}

\end{document}